\documentclass{article}

\usepackage{amsmath}
\usepackage{amssymb}
\usepackage{amsthm}

\usepackage{enumerate}
\usepackage{microtype}
\usepackage{bussproofs}
\usepackage{float}
\floatstyle{boxed}
\restylefloat{table}
\restylefloat{figure}

\usepackage{pgf,tikz}
\usetikzlibrary{shapes,calc,matrix,
arrows,automata,positioning,backgrounds,fit,decorations.pathreplacing}

\newtheorem{definition}{Definition}
\newtheorem{theorem}{Theorem}
\newtheorem{proposition}[theorem]{Proposition}

\newcommand{\cand}{\text{ and }}

\makeatletter
\newcommand{\raisemath}[1]{\mathpalette{\raisem@th{#1}}}
\newcommand{\raisem@th}[3]{\raisebox{#1}{$#2#3$}}
\makeatother

\newcommand{\bra}{\langle}
\newcommand{\ket}{\rangle}

\newcommand{\bracc}[1]{\ensuremath{\left[\!\left[ {#1} \right]\!\right]}}

\newcommand{\fall}[1]{{\forall\,{#1},\ }}

\newcommand{\mb}[1]{{\bf #1}}
\newcommand{\mc}[1]{{\mathcal{#1}}}
\newcommand{\mf}[1]{{\mathfrak #1}}

\newcommand{\sas}{\mathbin{\&}}

\newcommand{\Imp}{\mathop{\mathtt{Imp}}}

\newcommand{\Mes}{\mathop{\mathrm{Mes}}\nolimits}

\newcommand{\Id}{\mathop{\mathrm{Id}}}
\newcommand{\Vect}{\mathop{\mathrm{Vect}}}
\newcommand{\Slice}{\mathop{\mathrm{Slice}}}

\newcommand{\comp}{\mathrel{\mathrm{C}}}
\newcommand{\contains}{\mathrel{\subseteq_f}}
\makeatletter
\newcommand{\atfter}{\mathrel{\ @\ }}
\makeatother
\newcommand{\blank}{\,\rule{5pt}{.5pt}\,}

\newtheorem{corollary}{Corollary}

\makeatletter
\newcommand{\verif@es}{\blacktriangleright}
\newcommand{\verifies}{\mathrel{\verif@es}}
\newcommand{\verifiestar}{\mathrel{\verif@es^{\!\scriptstyle \#}}}
\newcommand{\verifiesm}[1]{\mathrel{\verif@es_{{\scriptstyle \mathfrak {#1}}}}}
\newcommand{\verifiestarm}[1]{\mathrel{\verif@es^{\!\scriptstyle \#}_{{\!\scriptstyle \mathfrak {#1}}}}}
\makeatother

\newcommand{\transition}[1]{\mathbin{\,\xrightarrow{\ #1\ }\,}}

\begin{document}

\title{On the Possibility of Quantum Circuits \\ Part~I: the Epistemic Level}

\author{Olivier Brunet \\ \texttt{olivier.brunet at normalesup.org}}

\maketitle

\begin{abstract}
We present a formulation of quantum circuits where the focus is set on whether a given circuit (made of unitary operators and projective measurements with definite outcomes) does reflect an actually realizable physical experiment. In order to do this, we introduce \emph{verifications statements} which are purely epistemic assertions indicating whether a outcome is possible at some point and develop our formalism which, in the end, consists in a set of logical rules about verification statements, as summarized in figure~\ref{fig:last_logic} on page~\pageref{fig:last_logic}. Finally, we argue that our formalism provides a Lorentz-invariant realistic formulation of quantum circuits and illustrate this by considering a circuit corresponding to Hardy's paradox and showing how our formalism prevents making contradictory assertions regarding our knowledge about the circuit.
\end{abstract}







\section{Introduction}


In this article, we will introduce a formulation of a fragment of quantum mechanics (corresponding to quantum circuits) based on a possibilistic~\cite{Fritz:Possibilistic} rather than probabilistic approach: the question we want to investigate is the definition of a general characterization of whether a given circuit corresponds to a actually feasable physical experiment. For instance, consider the circuit depicted in figure~\ref{fig:very_first} where a single particle $A$ is measured twice in a row, with successive outcomes $|0\ket$ (or, more precisely, the subspace $[0]$ spanned by $|0\ket$) and $[1]$. If the two measurements are projective, this circuit does not reflect the outcomes of an actual experiment. In particular, at $A_2$, after the first measurement occured, following the Born rule, it is not possible to obtain any outcome orthogonal to~$[0]$, which we will denote $A_2 \verifies [0]$ (we say that the circuit verifies $[0]$ at $A_2$, or even that $A_2$ verifies $[0]$).

Another example is illustrated in figure~\ref{fig:first}: two particles, $A$ and $B$ are first measured, with outcomes $[0]$ and $[1]$. Then, they are applied a controlled-not gate, and measured again, with the same outcomes. If one reasons in terms of quantum states, after the first measurements, at $A_2$ and $B_2$, the particles are respectively in states $|0\ket$ and $|1\ket$ so that after the CNot gate is applied, they are now both in state $|1\ket$, so that the obtention of outcome $[0]$ when measuring particle $A$ at $A_3$ is impossible. In fact, measuring both particles jointly, any outcome orthogonal to $[0] \otimes [1]$ is impossible, which we write $A_3, B_3 \verifies [0] \otimes [1]$. However, instead of relying on the delicate and elusive notion of quantum state, we will rather base our discussion on projective measurement outcomes.

In the following, after a brief presentaton of the formalization of quantum circuits we will use and some related notions, we will make a few assumptions about the way projective measurements act, how outcomes can follow each other -- so that a measurement outcome induces a prediction about a potential future outcome --, and how the application of unitary outcomes modifies these predictions. We will then define \emph{verification statements}, which correspond to one particular type of prediction, and our assumptions about the behaviour of projective measurements will lead us to the definition of a set of logical rules about verifications statements (the final version of which is presented in figure~\ref{fig:last_logic} on page~\pageref{fig:last_logic}). Finally, we will argue that the obtained formalism provides a Lorentz-invariant realistic formulation of quantum mechanics (or, at least, of the fragment corresponding to quantum circuits), and we will illustrate this by studying the modelization of Hardy's paradox in our approach and showing how some arguments forbidding any Lorentz-invariant interpretation of quantum mechanics are not valid therein.

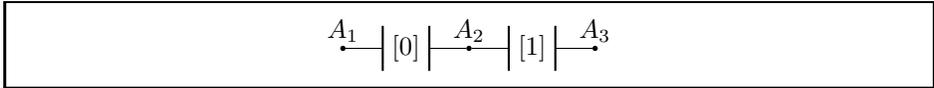
\begin{figure}
\begin{centering}
\begin{tikzpicture}[point/.style={circle,inner sep=0pt,outer sep = 0pt, minimum size=2pt,fill=black}]
	\matrix[row sep=4mm,column sep=5mm]{
		\node (a0) [point] {} ;
		&
		\node (a1) [rectangle] {$[0]$} ;
		&
		\node (a2) [point] {} ;
		&
		\node (a3) [rectangle] {$[1]$} ;
		&
		\node (a4) [point] {} ; \\
 	};
	\node [above = -2pt of a0] {$A_1$} ;
	\node [above = -2pt of a2] {$A_2$} ;
	\node [above = -2pt of a4] {$A_3$} ;
	\draw (a0) -- (a1) -- (a2) -- (a3) -- (a4) ;
	\draw [thick] (a1.north west) -- (a1.south west) ;
	\draw [thick] (a1.north east) -- (a1.south east) ;
	\draw [thick] (a3.north west) -- (a3.south west) ;
	\draw [thick] (a3.north east) -- (a3.south east) ;
\end{tikzpicture} \\
\end{centering}
\caption{An impossible circuit} \label{fig:very_first}
\end{figure}

\begin{figure}
\begin{centering}
\begin{tikzpicture}[point/.style={circle,inner sep=0pt,outer sep = 0pt, minimum size=2pt,fill=black}]
	\matrix[row sep=4mm,column sep=5mm]{
		\node (a0) [point] {} ;
		&
		\node (a1) [rectangle] {$[0]$} ;
		&
		\node (a2) [point] {} ;
		&
		\node [inner sep = -10pt, outer sep = 0pt] (a3) {$\bigoplus$} ;
		&
		\node (a4) [point] {} ;
		&
		\node (a5) [rectangle] {$[0]$} ; 
		&
		\node (a6) [point] {} ;
		\\
		\node (b0) [point] {} ;
		&
		\node (b1) [rectangle] {$[1]$} ;
		&
		\node (b2) [point] {} ;
		&
		\node [circle, inner sep=0pt,outer sep = 0pt, minimum size=4pt, fill=black] (b3) {};
		&
		\node (b4) [point] {} ;
		&
		\node (b5) [rectangle] {$[1]$} ;
		&
		\node (b6) [point] {} ;
		\\
 	};
	\node [above = -2pt of a0] {$A_1$} ;
	\node [above = -2pt of a2] {$A_2$} ;
	\node [above = -2pt of a4] {$A_3$} ;
	\node [above = -2pt of a6] {$A_4$} ;
	\node [above = -2pt of b0] {$B_1$} ;
	\node [above = -2pt of b2] {$B_2$} ;
	\node [above = -2pt of b4] {$B_3$} ;
	\node [above = -2pt of b6] {$B_4$} ;
	\draw (a0) -- (a1) -- (a2) -- (a3) -- (a4) -- (a5) -- (a6) ;
	\draw (b0) -- (b1) -- (b2) -- (b3) -- (b4) -- (b5) -- (b6) ;
	\draw (b3) -- (a3) ;
	\draw [thick] (a1.north west) -- (a1.south west) ;
	\draw [thick] (a1.north east) -- (a1.south east) ;
	\draw [thick] (a5.north west) -- (a5.south west) ;
	\draw [thick] (a5.north east) -- (a5.south east) ;
	\draw [thick] (b1.north west) -- (b1.south west) ;
	\draw [thick] (b1.north east) -- (b1.south east) ;
	\draw [thick] (b5.north west) -- (b5.south west) ;
	\draw [thick] (b5.north east) -- (b5.south east) ;
\end{tikzpicture}

\end{centering}
\caption{Another impossible circuit} \label{fig:first}
\end{figure}
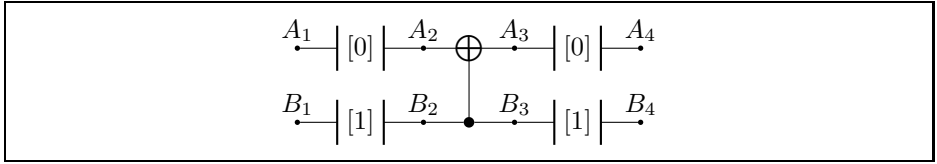

\section{Quantum Circuits}

Let us first define quantum circuits formally. They are acyclic oriented graphs with two types of nodes:
\begin{itemize}
	\item s-nodes (as for \emph{system}) which represent parts of a quantum system at a given stage of the circuit,
	\item o-nodes (as for \emph{operation}) which represent quantum operations applied to the system. Basically, we will consider two types of operations: projective measurements and unitary transformations.
\end{itemize}
Regarding measurements, the consideration of projective measurements only cannot be seen as a limitation: as we will be able to deal with composite systems, it will be possible to simulate POVMs as a consequence of Naymark's theorem \cite{NielsenChung2000Book,Peres2002:QuantumTheory}.

Quantum circuits are bipartite: any arrow must connect two nodes of different types.

Each s-node $s$ has a dimension $d(s)$ corresponding to the dimension of the Hilbert space used to model observables applicable to $s$. Moreover, each s-node may have \emph{at most} one incoming arrow and one outgoing one.

Formally, each o-node $U$ is defined by the following informations:
\begin{itemize}
\item an ordered list $(i_1, \ldots, i_p)$ specifying the number and dimensions of its incoming nodes,
\item another similar list $(o_1, \ldots, o_q)$ for the outgoing nodes,
\item a linear operator $[\![U]\!]$ from $\mb C^{\prod i_k}$ to $\mb C^{\prod o_k}$, which is either unitary or an orthogonal projection.
\end{itemize}
However, in practice, we don't to use such a heavy machinery directly. In the following, circuits will be described graphically, and the differents incoming and outgoing edges will be easily distinguishable (the only relevant type of o-node being C-Not gates). It has to keep in mind, though, that such a graphical depiction is only a handy way to describe o-nodes in such a way that the different incoming and outgoing edges can be distinguished.



\begin{figure}
\begin{centering}
\begin{tikzpicture}[point/.style={circle,inner sep=0pt,outer sep = 0pt, minimum size=2pt,fill=black}]
	\matrix[row sep=4mm,column sep=5mm]{
		& & & & & & \node (r0) [point] {} ; \\
		& & & & & & \node (a0) [point] {} ;
		&
		\node (a1) [circle, inner sep=0pt,outer sep = 0pt, minimum size=4pt, fill=black] {} ;
		&
		\node (a2) [point] {} ;
		&
		\node[rectangle, draw] (a3) {$H$} ;
		&
		\node (a4) [point] {} ;
		&
		\node (a5) [rectangle] {[1]} ;
		&
		\node (a6) [point] {} ;
		\\
		\node(bz) [point] {} ;
		&
		\node (b0) [rectangle] {$[1]$};
		&
		\node (b1) [point] {};
		&
		\node[rectangle, draw] (b2) {$H$} ;
		&
		\node (b3) [point] {};
		&
		\node [circle, inner sep=0pt,outer sep = 0pt, minimum size=4pt, fill=black] (b4) {};
		&
		\node (b5) [point] {};
		&
		\node [inner sep = -10pt, outer sep = 0pt] (b6) {$\bigoplus$} ;
		& &
		\node (b7) [point] {};
		& &
		\node (b8) [rectangle] {[0]} ;
		&
		\node (b9) [point] {} ;
		\\
		\node (cz) [point] {} ;
		&
		\node (c0) [rectangle] {$[1]$} ;
		& &
		\node (c1) [point] {};
		& &
		\node [inner sep = -10pt, outer sep = 0pt] (c2) {$\bigoplus$} ;
		&
		\node (c3) [point] {};
		&
		\node [rectangle, draw] (c4) {$\neg$} ;
		&
		\node (c5) [point] {};
		\\
	};
\node [above = -2pt of r0] {$R_1$} ;
\node [above = -2pt of a0] {$A_1$} ;
\node [above = -2pt of a2] {$A_2$} ;
\node [above = -2pt of a4] {$A_3$} ;
\node [above = -2pt of a6] {$A_4$} ;
\node [above = -2pt of bz] {$B_0$} ;
\node [above = -2pt of b1] {$B_1$} ;
\node [above = -2pt of b3] {$B_2$} ;
\node [above = -2pt of b5] {$B_3$} ;
\node [above = -2pt of b7] {$B_4$} ;
\node [above = -2pt of b9] {$B_5$} ;
\node [above = -2pt of cz] {$C_0$} ;
\node [above = -2pt of c1] {$C_1$} ;
\node [above = -2pt of c3] {$C_2$} ;
\node [above = -2pt of c5] {$C_3$} ;
\draw (a0) -- (a1) -- (a2) -- (a3) -- (a4) -- (a5) -- (a6) ;
\draw [thick] (a5.north west) -- (a5.south west) ;
\draw [thick] (a5.north east) -- (a5.south east) ;
\draw (bz) -- (b0) -- (b1) -- (b2) -- (b3) -- (b4) -- (b5) -- (b6) -- (b7) -- (b8) -- (b9);
\draw [thick] (b0.north west) -- (b0.south west) ;
\draw [thick] (b0.north east) -- (b0.south east) ;
\draw [thick] (b8.north west) -- (b8.south west) ;
\draw [thick] (b8.north east) -- (b8.south east) ;
\draw (cz) -- (c0) -- (c1) -- (c2) -- (c3) -- (c4) -- (c5) ;
\draw [thick] (c0.north west) -- (c0.south west) ;
\draw [thick] (c0.north east) -- (c0.south east) ;
\draw (c2) -- (b4) ;
\draw (b6) -- (a1) ;
\end{tikzpicture}

\end{centering}
\caption{$\mf T$, a larger circuit} \label{fig:T_uncut}
\end{figure}

\ 

The acyclic structure of a quantum circuit induces a partial order between its nodes:
\begin{definition}[Order relation]
A node $a$ is \emph{in the past} of a node $b$ in $\mf C$ is there is a finite sequence of arrows from $a$ to $b$. In that case, we write
$$ a \transition {\mf C} b $$
\end{definition}

\begin{definition}[Full subgraph]
Given two quantum circuits $\mf C$ and $\mf C'$, we say that $\mf C$ is a \emph{full subgraph} of $\mf C'$ if $\mf C \subseteq \mf C'$ and, moreover, $\mf C$ preserves the space time structure of $\mf C'$, that~is
$$ \fall {a, b \in \mf C} a \transition {\mf C} b \iff a \transition {\mf C'} b $$
We denote this $\mf C \contains \mf C'$.
\end{definition}

\begin{proposition}
Being a full subgraph is a partial order among quantum circuits, so that for all $\mf C$, $\mf C'$ and $\mf C''$,
$$ \mf C \contains \mf C \qquad \bigl(\mf C \contains \mf C' \cand \mf C' \contains \mf C''\bigr) \implies \mf C \contains \mf C'' $$
\end{proposition}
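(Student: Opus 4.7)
The plan is to check the three axioms of a partial order separately, unpacking the definition $\mf C \contains \mf C'$ into its two conjuncts: the set inclusion $\mf C \subseteq \mf C'$ and the reachability biconditional $a \transitions{\mf C} b \iff a \transitions{\mf C'} b$ for all $a,b \in \mf C$. (The display only states reflexivity and transitivity explicitly, but since the proposition claims ``partial order'' I will also dispatch antisymmetry.)

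For reflexivity, both clauses are trivial: $\mf C \subseteq \mf C$ holds by definition, and the biconditional $a \transition {\mf C} b \iff a \transition {\mf C} b$ is tautological, so $\mf C \contains \mf C$.

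For transitivity, I would assume $\mf C \contains \mf C'$ and $\mf C' \contains \mf C''$ and verify each clause of $\mf C \contains \mf C''$ in turn. The inclusion $\mf C \subseteq \mf C''$ follows by chaining $\mf C \subseteq \mf C' \subseteq \mf C''$ through transitivity of set inclusion. For the reachability clause, I would fix arbitrary $a, b \in \mf C$; the slightly delicate step — really the only content of the argument — is to notice that before invoking the second hypothesis one must know that $a$ and $b$ also lie in $\mf C'$, and this is guaranteed precisely by $\mf C \subseteq \mf C'$ from the first hypothesis. Once this is observed, the two biconditionals compose to give $a \transition {\mf C} b \iff a \transition {\mf C'} b \iff a \transition {\mf C''} b$, which is what is required.

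Antisymmetry is then immediate from antisymmetry of $\subseteq$: if $\mf C \contains \mf C'$ and $\mf C' \contains \mf C$ then in particular $\mf C \subseteq \mf C'$ and $\mf C' \subseteq \mf C$, so $\mf C = \mf C'$ as graphs. I do not expect any genuine obstacle here; the whole proof is bookkeeping around the definition of $\contains$, and the only point that requires a moment's care is remembering to use the set-inclusion half of the hypothesis when moving reachability statements from $\mf C$ up through $\mf C'$ into $\mf C''$.
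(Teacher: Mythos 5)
Your proof is correct. The paper in fact states this proposition without any proof, treating it as routine; your argument — checking reflexivity and transitivity clause by clause, with the one substantive observation that $\mf C \subseteq \mf C'$ is needed to ensure $a, b \in \mf C'$ before the second reachability biconditional can be invoked — is exactly the bookkeeping the author left implicit, and your remark on antisymmetry is a harmless bonus beyond what the displayed formula asserts.
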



An important element of our formalism will rely on what we will define as slices: In a quantum circuit, an s-node can be seen as corresponding to an spacetime event, and it will be interesting to considered simultaneously several spacelike separated events, where being spacelike separated corresponds to the fact that, in the circuit, they are not comparable with regards to the previously defined order relation.

\begin{definition}[Slice]
Given a circuit $\mf C$, a \emph{slice} $\Gamma$ of $\mf C$ is a ordered set of mutually uncomparable s-nodes of $\mf C$. Let $\Slice(\mf C)$ denote the set of slices of $\mf C$.

The \emph{support} of a slice $\Gamma = [s_1, \ldots, s_n]$ is the set made of its s-nodes:
$$ \Gamma_{\{\}} = \bigl\{s_1, \ldots, s_n\bigr\} $$
The \emph{dimension} of a slice $\Gamma = [s_1, \ldots, s_n]$ is defined as the product of the dimension of its nodes:
$$ d(\Gamma) = d(s_1) \times \ \cdots \ \times d(s_n) $$
\end{definition}

\begin{proposition}
If $\mf C \contains \mf C'$, then $\Slice(\mf C) \subseteq \Slice(\mf C')$.
\end{proposition}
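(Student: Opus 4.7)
The plan is to unfold both definitions and observe that the full-subgraph condition directly yields the required preservation of slices. Concretely, I would start by fixing an arbitrary slice $\Gamma = [s_1, \ldots, s_n] \in \Slice(\mf C)$ and aim to show that $\Gamma$ qualifies as a slice of $\mf C'$ as well, i.e.\ that (i) each $s_i$ is an s-node of $\mf C'$, and (ii) the $s_i$ are pairwise incomparable in $\mf C'$.

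Point (i) is immediate from $\mf C \contains \mf C'$, since this entails $\mf C \subseteq \mf C'$, so every node of $\mf C$ (in particular every s-node of $\Gamma$) is a node of $\mf C'$, with the same dimension and of the same type.

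Point (ii) is where the ``full'' part of full subgraph is used. For any two indices $i \neq j$, the fact that $\Gamma$ is a slice of $\mf C$ means that neither $s_i \transition{\mf C} s_j$ nor $s_j \transition{\mf C} s_i$ holds. Applying the equivalence $a \transition{\mf C} b \iff a \transition{\mf C'} b$ from the definition of $\contains$ (which is available because $s_i, s_j \in \mf C$), we get that neither $s_i \transition{\mf C'} s_j$ nor $s_j \transition{\mf C'} s_i$ holds either. Hence the $s_i$ remain mutually incomparable in $\mf C'$, and $\Gamma \in \Slice(\mf C')$.

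There is essentially no obstacle: the statement is really just an unfolding of the full-subgraph definition, whose whole purpose is precisely to guarantee that structural features like slices transfer from $\mf C$ to $\mf C'$. The only conceptual point worth stating explicitly in the proof is that both directions of the iff are needed: one direction would already give transfer of comparability, but to transfer \emph{in}comparability (i.e.\ the absence of a path) we need the converse direction, which is exactly what distinguishes a full subgraph from a plain subgraph.
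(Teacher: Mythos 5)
Your proof is correct, and it is exactly the definition-unfolding argument the paper has in mind (the paper states this proposition without proof, treating it as immediate from the definition of $\contains$). Your remark that the converse direction of the equivalence $a \transition{\mf C} b \iff a \transition{\mf C'} b$ is the part actually doing the work in transferring \emph{in}comparability is accurate and worth keeping.
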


Another important idea in our approach will be to cut a circuit along a given slice, by removing its future, and possibly replacing it by a measurement o-node.

\begin{definition}[Cutting a circuit along a slice]
Given a graph $\mf C$ and a slice $\Gamma \in \Slice(\mf C)$, we define $\mf C|_{\Gamma}$ as the graph obtained from $\mf C$~by removing all the nodes (both s-nodes and o-nodes) in the strict future of~$\Gamma$.
\end{definition}

An example of such a cut, namely the cut of $\mf T$ along $\{B_3, C_2\}$ is depicted in figure~\ref{fig:T_cut}. In figure~\ref{fig:cut_measure}, one has the same circuit with a measurement o-node added.
Finally, let us remark that cutting a circuit $\mf C$ along a slice provides a full subgraph of $\mf C$:

\begin{proposition} \label{prop:cutting_is_full}
If $\Gamma \in \Slice(\mf C)$, then $\mf C|_\Gamma \contains \mf C$.
\end{proposition}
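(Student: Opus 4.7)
The plan is to unfold the definition of $\contains$ and check its two clauses. The inclusion $\mf C|_\Gamma \subseteq \mf C$ is immediate from the construction, since cutting only removes nodes (along with their incident edges) and never introduces new ones. The substantive content of the statement is therefore the reachability equivalence $a \transition{\mf C|_\Gamma} b \iff a \transition{\mf C} b$ for every pair $a, b \in \mf C|_\Gamma$.

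The forward implication is trivial: any finite sequence of arrows witnessing $a \transition{\mf C|_\Gamma} b$ sits inside $\mf C$ verbatim, since every edge of the cut survives in the original. For the converse, I would reduce the claim to showing that any path in $\mf C$ between two nodes of $\mf C|_\Gamma$ stays entirely inside $\mf C|_\Gamma$. So fix a sequence of arrows $a = n_0 \to n_1 \to \cdots \to n_k = b$ in $\mf C$, and assume for contradiction that some intermediate $n_i$ has been removed, \emph{i.e.}\ belongs to the strict future of $\Gamma$. Then there is some $s \in \Gamma$ with $s \neq n_i$ and $s \transition{\mf C} n_i$.

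Concatenating with the suffix $n_i \to n_{i+1} \to \cdots \to b$ yields $s \transition{\mf C} b$. Two cases arise. If $s \neq b$, then $b$ itself lies in the strict future of $\Gamma$, contradicting $b \in \mf C|_\Gamma$. If $s = b$, then combining $s \transition{\mf C} n_i$ with $n_i \transition{\mf C} s$ closes a nontrivial cycle through $s$, contradicting the acyclicity of $\mf C$.

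The main obstacle, though a minor one, is really only this dichotomy: one has to notice that the cut not only preserves the ancestry relation on the endpoints but, thanks to acyclicity, also forces every intermediate node of any witnessing path to survive. The sub-case where $b$ itself lies in $\Gamma$ is absorbed by the same argument, because the mutual incomparability built into the definition of a slice forbids any other element of $\Gamma$ from being a strict predecessor of $b$, and so the reasoning above applies uniformly.
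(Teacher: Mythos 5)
Your argument is correct. The paper states Proposition~\ref{prop:cutting_is_full} without any proof, so there is nothing to compare against; your write-up supplies exactly the missing content. The two clauses of $\contains$ are handled appropriately: inclusion and the forward implication are immediate, and the substantive converse direction is settled by your observation that every intermediate node of a witnessing path in $\mf C$ must survive the cut --- the dichotomy between ``$b$ is then itself in the strict future of $\Gamma$'' and ``a nontrivial cycle through $s$ arises'' correctly exhausts the cases, and your closing remark invoking the mutual incomparability of the nodes of a slice properly disposes of the boundary case where $b$ lies on $\Gamma$ itself.
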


\begin{figure}
\begin{centering}
\begin{tikzpicture}[point/.style={circle,inner sep=0pt,outer sep = 0pt, minimum size=2pt,fill=black}]
	\matrix[row sep=4mm,column sep=5mm]{
		& & & & & & \node (r0) [point] {} ; \\
		& & & & & & \node (a0) [point] {} ;
		&
		\node (a1) [circle, inner sep=0pt,outer sep = 0pt, minimum size=4pt, fill=black, black!25] {} ;
		&
		\node (a2) [point, black!25] {} ;
		&
		\node[rectangle, draw, black!25] (a3) {$H$} ;
		&
		\node (a4) [point, black!25] {} ;
		&
		\node (a5) [rectangle, black!25] {[1]} ;
		&
		\node (a6) [point, black!25] {} ;
		\\
		\node (bz) [point] {};
		&
		\node (b0) [rectangle] {$[1]$};
		&
		\node (b1) [point] {};
		&
		\node[rectangle, draw] (b2) {$H$} ;
		&
		\node (b3) [point] {};
		&
		\node [circle, inner sep=0pt,outer sep = 0pt, minimum size=4pt, fill=black] (b4) {};
		&
		\node (b5) [point] {};
		&
		\node [inner sep = -10pt, outer sep = 0pt, black!25] (b6) {$\bigoplus$} ;
		& &
		\node (b7) [point, black!25] {};
		& &
		\node (b8) [rectangle, black!25] {$[0]$} ;
		&
		\node (b9) [point, black!25] {} ;
		\\
		\node (cz) [point] {} ;
		&
		\node (c0) [rectangle] {$[1]$} ;
		& &
		\node (c1) [point] {};
		& &
		\node [inner sep = -10pt, outer sep = 0pt] (c2) {$\bigoplus$} ;
		&
		\node (c3) [point] {};
		&
		\node [rectangle, draw, black!25] (c4) {$\neg$} ;
		&
		\node (c5) [point, black!25] {};
		\\
	};
\node [above = -2pt of r0] {$R_1$} ;
\node [above = -2pt of a0] {$A_1$} ;
\node [above = -2pt of a2, black!25] {$A_2$} ;
\node [above = -2pt of a4, black!25] {$A_3$} ;
\node [above = -2pt of a6, black!25] {$A_4$} ;
\node [above = -2pt of bz] {$B_0$} ;
\node [above = -2pt of b1] {$B_1$} ;
\node [above = -2pt of b3] {$B_2$} ;
\node [above = -2pt of b5] {$B_3$} ;
\node [above = -2pt of b7, black!25] {$B_4$} ;
\node [above = -2pt of b9, black!25] {$B_5$} ;
\node [above = -2pt of cz] {$C_0$} ;
\node [above = -2pt of c1] {$C_1$} ;
\node [above = -2pt of c3] {$C_2$} ;
\node [above = -2pt of c5, black!25] {$C_3$} ;
\draw [black!25] (a0) -- (a1) -- (a2) ;
\draw [black!25] (a2) -- (a3) -- (a4) -- (a5) -- (a6) ;
\draw [thick, black!25] (a5.north west) -- (a5.south west) ;
\draw [thick, black!25] (a5.north east) -- (a5.south east) ;
\draw (bz) -- (b0) -- (b1) -- (b2) -- (b3) -- (b4) -- (b5) ;
\draw [black!25] (b5) -- (b6) -- (b7) -- (b8) -- (b9);
\draw [thick] (b0.north west) -- (b0.south west) ;
\draw [thick] (b0.north east) -- (b0.south east) ;
\draw [thick, black!25] (b8.north west) -- (b8.south west) ;
\draw [thick, black!25] (b8.north east) -- (b8.south east) ;
\draw (cz) -- (c0) -- (c1) -- (c2) -- (c3) ;
\draw [black!25] (c3) -- (c4) -- (c5) ;
\draw [thick] (c0.north west) -- (c0.south west) ;
\draw [thick] (c0.north east) -- (c0.south east) ;
\draw (c2) -- (b4) ;
\draw [black!25] (b6) -- (a1) ;
\end{tikzpicture} \\
\end{centering}
\caption{Removing the strict future of $\{B_3, C_2\}$ in $\mf T$, yielding $\mf T|_{\{B_3, C_2\}}$} \label{fig:T_cut}
\end{figure}
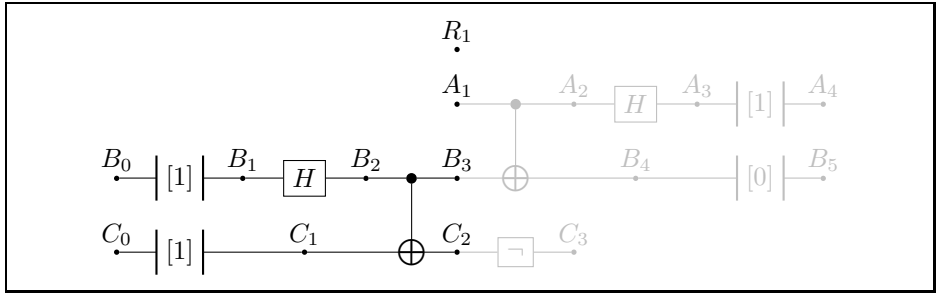









\begin{figure}
\begin{centering}
\begin{tikzpicture}[point/.style={circle,inner sep=0pt,outer sep = 0pt, minimum size=2pt,fill=black}]
	\matrix[row sep=5mm,column sep=5mm]{
		& & & & & & \node (r0) [point] {} ; \\
		& & & & & & \node (a0) [point] {} ; & \node (a1) [rectangle] {\vphantom{$[0]$}};\\
		\node (bz) [point] {} ;
		&
		\node (b0) [rectangle] {$[1]$};
		&
		\node (b1) [point] {};
		&
		\node[rectangle, draw] (b2) {$H$} ;
		&
		\node (b3) [point] {};
		&
		\node [circle, inner sep=0pt,outer sep = 0pt, minimum size=4pt, fill=black] (b4) {};
		&
		\node (b5) [point] {};
		&
		\node (b6) [rectangle] {[1]} ; ;
		&
		\node (b7) [point, black] {}; \\
		\node (cz) [point] {} ;
		&
		\node (c0) [rectangle] {$[1]$} ;
		& &
		\node (c1) [point] {};
		& &
		\node [inner sep = -10pt, outer sep = 0pt] (c2) {$\bigoplus$} ;
		&
		\node (c3) [point] {};
		&
		\node [rectangle] (c4) {$[0]$} ;
		&
		\node (c5) [point] {};
		\\
	};
\node [above = -2pt of r0] {$R_1$} ;
\node [above = -2pt of a0] {$A_1$} ;
\node [above = -2pt of bz] {$B_0$} ;
\node [above = -2pt of b1] {$B_1$} ;
\node [above = -2pt of b3] {$B_2$} ;
\node [above = -2pt of b5] {$B_3$} ;
\node [above = -2pt of b7] {$B_4$} ;
\node [above = -2pt of cz] {$C_0$} ;
\node [above = -2pt of c1] {$C_1$} ;
\node [above = -2pt of c3] {$C_2$} ;
\node [above = -2pt of c5] {$C_3$} ;
\draw (bz) -- (b0) -- (b1) -- (b2) -- (b3) -- (b4) -- (b5) -- (b6) -- (b7);
\draw [thick] (b0.north west) -- (b0.south west) ;
\draw [thick] (b0.north east) -- (b0.south east) ;
\draw [thick] (b6.north west) -- (c4.south west) ;
\draw [thick] (b6.north east) -- (c4.south east) ;
\draw (cz) -- (c0) -- (c1) -- (c2) -- (c3) -- (c4) -- (c5) ;
\draw [thick] (c0.north west) -- (c0.south west) ;
\draw [thick] (c0.north east) -- (c0.south east) ;
\draw (c2) -- (b4) ;
\end{tikzpicture} \\\end{centering}
\caption{$\mf T|_{\{B_3, C_2\}} \cup \bigl\{B_4, C_3 = \Mes_{[10]}(B_3, C_2)\bigr\}$} \label{fig:cut_measure}
\end{figure}

\section{Possible and Impossible Circuits}

Having defined the formalism for representing quantum circuits, let us now present some assumptions regarding whether a given quantum circuit is \emph{possible}. These assumptions will only relate to the obtention of measurement outcomes. In particular, no mention will be made of any notion of quantum state. Instead, the rules we shall enounce will correspond to some situations which, as we will assume, cannot correspond to an actual physical situation. If it is the case, if a quantum circuit~$\mf C$ can be shown to be impossible (with regards to our assumptions), we will denote
$$ \mf C \vdash \Imp $$

Obviously, it has to be kept in mind that these assumptions must be compatible with the standard formulation of quantum mechanics in order to make correct predictions.

\subsection{Non Contradiction}

Our first assumption is that, since we only consider projective measurements, it is not possible to obtain two orthogonal outcomes when measuring the same system twice in a row. Diagramatically, this means that any quantum circuit~$\mf C$ containing two consecutive projective measurements with orthogonal outcomes is impossible, as depicted in figure~\ref{fig:basic_impossibility}.


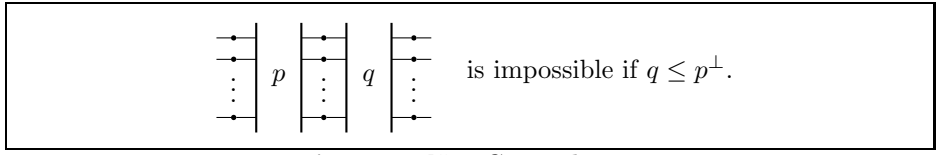
\begin{figure}[ht]
\hbox{} \hfill
\begin{tikzpicture}[baseline=(current bounding box.center),point/.style={circle,inner sep=0pt,outer sep = 0pt, minimum size=2pt,fill=black}]
	\matrix[row sep=-1mm]{
		\node (a0) {} ; &[2pt] \node (a1) [point] {}; &[4pt] \node[rectangle] (A1) {$\ \vphantom{a}\ \ $} ; &[4pt] \node (a2) [point] {}; &[4pt] \node[rectangle] (A2) {$\ \vphantom{a}\ \ $} ; &[4pt] \node (a4) [point] {}; &[2pt] \node (a5) {} ; \\
		\node (b0) {} ; & \node (b1) [point] {}; & \node[rectangle] (B1) {$\ \vphantom{a}\ \ $} ; & \node (b2) [point] {}; & \node[rectangle] (B2) {$\ \vphantom{a}\ \ $} ; & \node (b4) [point] {}; & \node (b5) {} ; \\[-5pt]
	 	& \node (c) {\raisebox{5pt}{$\vdots$}} ; & & \node (c) {\raisebox{5pt}{$\vdots$}} ; & & \node (c) {\raisebox{5pt}{$\vdots$}} ; \\[-5pt]
		\node (d0) {} ; & \node (d1) [point] {}; & \node[rectangle] (D1) {$\ \vphantom{a}\ \ $} ; & \node (d2) [point] {}; & \node[rectangle] (D2) {$\ \vphantom{a}\ \ $} ; & \node (d4) [point] {}; & \node (d5) {} ; \\[6pt]
	};
\draw (a0) -- (a1) -- (A1) -- (a2) -- (A2) -- (a4) -- (a5);
\draw (b0) -- (b1) -- (B1) -- (b2) -- (B2) -- (b4) -- (b5);
\draw (d0) -- (d1) -- (D1) -- (d2) -- (D2) -- (d4) -- (d5);
\draw [thick] (A1.north west) -- (D1.south west) ;
\draw [thick] (A1.north east) -- (D1.south east) ;
\node at ($(A1.north west)!.5!(D1.south east)$) {$p$} ;
\draw [thick] (A2.north west) -- (D2.south west) ;
\draw [thick] (A2.north east) -- (D2.south east) ;
\node at ($(A2.north west)!.5!(D2.south east)$) {$q$} ;
\end{tikzpicture} is impossible if $q \leq p^\bot$.%
\hfill \hbox{}
\caption{Non Contradiction}
\label{fig:basic_impossibility}
\end{figure}

In a more general way, if a circuit~$\mf C$ contains a measurement $\Gamma' = \Mes_p(\Gamma)$, then cutting~$\mf C$ at~$\Gamma'$ and inserting a measurement of the form $\Gamma'' = \Mes_q(\Gamma')$ leads to an impossible circuit if $q \leq p^\bot$.

Here, the only slice of interest is $\Gamma'$, so that in order to have lighter notations, we shall omit to explicitely name the other two, writing ``$\blank$'' instead. This way, the previous statement can be rephrased~as: if~$\mf C$ contains a measurement~$\Gamma = \Mes_p(\blank)$ (that is if $\Gamma = \Mes_p(\blank) \in \mf C$), then cutting it at $\Gamma$ and inserting~$\blank = \Mes_q(\Gamma)$ leads to an impossible circuit if $q \leq p^\bot$:
$$ \bigl(\Gamma = \Mes_p(\blank) \in \mf C \cand q \leq p^\bot\bigr) \implies \mf C|_\Gamma \cup \bigl\{\blank = \Mes_q(\Gamma)\bigr\} \vdash \Imp $$
In the following, we will write this as a logical rule:
\begin{prooftree}
\AxiomC{$\Gamma = \Mes_p(\blank) \in \mf C$}
\AxiomC{$q \leq p^\bot$}
\RightLabel{NC}
\BinaryInfC{$\mf C|_{\Gamma} \cup \bigl\{\blank = \Mes_{q}(\Gamma)\bigr\} \vdash \Imp$}
\end{prooftree}
where the top line corresponds the premises of the logical deduction (here, that~$\mf C$ contains $\Gamma = \Mes_p(\blank)$ and that $p$ and $q$ are orthogonal) and the bottom line to the conclusion which here states that cutting $\mf C$ at $\Gamma$ and inserting the measurement $\blank = \Mes_q(\Gamma)$ leads to an impossible circuit.

The name of the rule (here NC which stands for \emph{Non Contradiction}) is indicated on the right of the horizontal line.

\subsection{Considering Observables}

Our second assumption, divided in two parts, formalizes the behaviour of the measurement of observables. Let us define this notion, in the most general way, for an orthomodular lattice $L$:

\begin{definition}[Observable]
A (projective) observable of an orthomodular lattice $L$ is a finite subset $\{p_1, \ldots, p_n\}$ of $L$ such that~:
$$ \fall {i \in \bracc{1,n}} p_i \neq \bot, \quad \fall {i,j \in \bracc{1,n}} i \neq j \implies p_i \leq p_j^\bot \quad \hbox{and} \quad \bigvee_{i=1}^n p_i = \top $$
\end{definition}
In the following, we will use orthomodular lattices $L_n$ defined, for $n \in \mb N$, as the set of subspaces of $\mb C^n$. In particular, any outcome of an obervable applicable at slice $\Gamma$ will be a subspace of $\mb C^{d(\Gamma)}$, i.e.\ an element of $L_{d(\Gamma)}$. The top element $\top$ then corresponds to the whole vector space, while the bottom element $\bot$ is the nullspace.

\ 

We first remark that $\bot$ cannot be part of an observable, it is not a valid outcome. As such, any circuit containing an o-node of the form $\blank = \Mes_\bot(\blank)$ is impossible. We can write this as a rule the following~way (the name \emph{Top} will be clearer soon):
\begin{prooftree}
\AxiomC{$\blank = \Mes_\bot(\blank) \in \mf C$}
\RightLabel{Top}
\UnaryInfC{$\mf C \vdash \Imp$}
\end{prooftree}
or, equivalently, for any slice $\Gamma \in \Slice(\mf C)$:
\begin{prooftree}
\AxiomC{\vphantom{()}}
\UnaryInfC{$\mf C|_\Gamma \cup \{ \blank = \Mes_\bot(\Gamma) \} \vdash \Imp$}
\end{prooftree}

The other part of our assumption is that considering any slice $\Gamma$ of a quantum circuit $\mf C$, and any observable $\mc O = \{p_1, \ldots, p_n\}$ applicable at $\Gamma$, then if $\mf C$ is possible, then when measuring $\mf C$ at~$\Gamma$, at least one of the outcomes of $\mc O$ is possible. Considering the contraposition, if none of the outcomes of $\mc O$ is possible at~$\Gamma$ in $\mf C$, then $\mf C$ is impossible. As a rule, this can be expressed~as
\begin{prooftree}
\AxiomC{$\mf C|_{\Gamma} \cup \bigl\{\blank = \Mes_{p_1}(\Gamma)\bigr\} \vdash \Imp$}
\AxiomC{$\hspace{-5pt}\cdots\hspace{-5pt}$}
\AxiomC{$\mf C|_{\Gamma} \cup \bigl\{\blank = \Mes_{p_n}(\Gamma)\bigr\} \vdash \Imp$}
\RightLabel{Mes}
\TrinaryInfC{$\mf C \vdash \Imp$}
\end{prooftree}

In this rule, the pattern $\mf C|_\Gamma \vdash \{\blank = \Mes_p(\Gamma)\} \vdash \Imp$ appears several times, and it already appeared as the conclusion of the \emph{Non Contradiction} rule. This pattern will actually appear pervasively in our approach and this motivates the following definition:

\begin{definition}[Verification Statement]
Given a quantum circuit $\mf C$ and a slice $\Gamma$ of $\mf C$, we say that $\mf C$ \emph{verifies} $p$ at $\Gamma$ if and only~if:
$$ \mf C|_\Gamma \cup \bigl\{\blank = \Mes_{p^\bot}(\Gamma)\bigr\} \vdash \Imp $$
which we denote
$$ \mf C \vdash \Gamma \verifies p. $$
\end{definition}

With this definition, we can express the \emph{Mes} rule~as
\begin{prooftree}
\AxiomC{$\mf C \vdash \Gamma \verifies p_1^\bot$}
\AxiomC{$\hspace{-5pt}\cdots\hspace{-5pt}$}
\AxiomC{$\mf C \vdash \Gamma \verifies p_n^\bot$}
\RightLabel{Mes}
\TrinaryInfC{$\mf C \vdash \Imp$}
\end{prooftree}
Similarly, the previous \emph{Top} rule can be expressed~as
\begin{prooftree}
\AxiomC{\vphantom{()}}
\RightLabel{Top}
\UnaryInfC{$\mf C \vdash \Gamma \verifies \top$}
\end{prooftree}
in which form the name \emph{Top} becomes clear.
The \emph{Non Contradiction} rule becomes, in the special case where $q = p^\bot$:
\begin{prooftree}
\AxiomC{$\Gamma = \Mes_p(\blank) \in \mf C$}
\RightLabel{NC}
\UnaryInfC{$\mf C|_{\Gamma} \cup \bigl\{\blank = \Mes_{p^\bot}(\Gamma)\bigr\} \vdash \Imp$}
\end{prooftree}
and, in terms of verification statements, can be rewritten~as
\begin{prooftree}
\AxiomC{$\Gamma = \Mes_p(\blank) \in \mf C$}
\RightLabel{NC}
\UnaryInfC{$\mf C \vdash \Gamma \verifies p$}
\end{prooftree}

\subsection{Possibilistic Non-Contextuality}

The third assumption stems from the joint consideration of quantum mechanics and relativity. Following Aharonov and Albert \cite{Aharonov-Albert:IsUsualII}, consider a particle which may be located in any of three separate boxes $A$, $B$ and $C$ and suppose that one has enough knowledge to predict with certainty that the particle can neither be found in $B$ nor in~$C$. In that case, if an experimenter first opens boxes~$B$ and~$C$ (finding them empty), then the opening of box $A$ will lead to finding there our particle with certainty.

Suppose now that box $A$ is sufficiently far from the other two boxes, and consider a second reference frame where the opening of~$A$ happens \emph{before} that of~$B$ and~$C$. Obviously, in this reference frame, the measurement outcomes are the same and hence the particle will still be found in box~$A$.

But as~$A$ is opened before the other boxes, there is the possibility in this reference frame that, after the opening of box~$A$, the content of boxes $B$ and $C$ are modified, merged, exchanged, etc. In terms of observables, this means that an other observable can actually measured, with the restriction that $A$ must be one of its outcomes. In this situation, the particle remains to be found whatever happens later to boxes $B$ and $C$.

This leads to the assumption of \emph{possibilistic non-contextuality} which states that the certainty of an outcome is independant of which observable is actually measured (as long as the outcome remains a possible one), and a similar reasoning can be done regarding impossible outcomes.

\ 

Let us present two logical rules which follow from this assumption. First, suppose that it is impossible to obtain an outcome $p$ when measuring a circuit~$\mf C$ at~$\Gamma$, that~is
$$ \mf C|_\Gamma \cup \{ \blank = \Mes_p(\Gamma)\bigr\} \vdash \Imp $$
In that case, measuring observable $\{p, p^\bot\}$ at $\Gamma$ in $\mf C$ will yield outcome $p^\bot$ with certainty. For $q \leq p$, considering observable $\{q, p \wedge q^\bot, p^\bot\}$, from the certainty of outcome $p^\bot$, it follows that $q$ is not possible. We thus deduce the \emph{Order} rule:
\begin{prooftree}
\AxiomC{$\mf C|_\Gamma \cup \bigl\{\blank = \Mes_p(\Gamma)\bigr\} \vdash \Imp$}
\AxiomC{$q \leq p$}
\RightLabel{Ord}
\BinaryInfC{$\mf C|_\Gamma \cup \bigl\{\blank = \Mes_q(\Gamma)\bigr\} \vdash \Imp$}
\end{prooftree}
or, equivalently, using verification statements:
\begin{prooftree}
\AxiomC{$\mf C \vdash \Gamma \verifies p$}
\AxiomC{$p \leq q$}
\RightLabel{Ord}
\BinaryInfC{$\mf C \vdash \Gamma \verifies q$}
\end{prooftree}

Suppose now that two compatible outcomes $p$ and $q$ are assumed to be impossible at $\Gamma$, so that they both belong to a single boolean subalgebra of our orthomodular lattice, which we denote $p \comp q$. If we consider observable $\{p, p^\bot \wedge q, p^\bot \wedge q^\bot\}$, the impossibility of $p$ and of $p^\bot \wedge q$ (which follows, using the \emph{Ord} rule, from the impossibility of $q$) implies the certainty of $p^\bot \wedge q^\bot$.

Now, considering observable $\{p \vee q, p^\bot \wedge q^\bot\}$, the certainty of $p^\bot \wedge q^\bot$ implies the impossibility of $p \vee q$. We thus have derived the \emph{Compatible Meet} rule:
\begin{prooftree}
\AxiomC{$\mf C|_\Gamma \cup \bigl\{\blank = \Mes_p(\Gamma)\bigr\} \vdash \Imp$}
\AxiomC{\hspace{-10pt}$\mf C|_\Gamma \cup \bigl\{\blank = \Mes_q(\Gamma)\bigr\} \vdash \Imp$\hspace{-10pt}}
\AxiomC{$q \comp p$}
\RightLabel{CM}
\TrinaryInfC{$\mf C|_\Gamma \cup \bigl\{\blank = \Mes_{p \vee q}(\Gamma)\bigr\} \vdash \Imp$}
\end{prooftree}
or, more compactly:
\begin{prooftree}
\AxiomC{$\mf C \vdash \Gamma \verifies p$}
\AxiomC{$\mf C \vdash \Gamma \verifies q$}
\AxiomC{$p \comp q$}
\RightLabel{CM}
\TrinaryInfC{$\mf C \vdash \Gamma \verifies p \wedge q$}
\end{prooftree}
\subsection{A Few More Rules}

Let us now present a few more assumptions about the behavior of projective measurements.

\paragraph{Unitary Operator} Considering the application of unitary operators, it is reasonable to assume that if an outcome $p$ is impossible at a slice $\Gamma$ in a circuit~$\mf C$ and if, in $\mf C$, $\Gamma' = U(\Gamma)$ for some unitary o-node $U$ with associated unitary operator $[\![U]\!]$, then outcome $[\![U]\!](p)$ is impossible at $\Gamma'$. By allowing to only apply the unitary operator to a part of $\Gamma$, we obtain the following rule:
\begin{prooftree}
\AxiomC{$\mf C|_{\Gamma :: \Delta} \cup \bigl\{\blank = \Mes_p(\Gamma :: \Delta)\bigr\} \vdash \Imp$}
\AxiomC{$\Gamma' = U(\Gamma) \in \mf C$}
\RightLabel{Uni}
\BinaryInfC{$\mf C|_{\Gamma :: \Delta} \cup \bigl\{\blank = \Mes_{([\![U]\!] \otimes \Id)(p)}(\Gamma' :: \Delta)\bigr\} \vdash \Imp$}
\end{prooftree}
or, more compactly:
\begin{prooftree}
\AxiomC{$\mf C \vdash \Gamma :: \Delta \verifies p$}
\AxiomC{$\Gamma' = U(\Gamma) \in \mf C$}
\RightLabel{Uni}
\BinaryInfC{$\mf C \vdash \Gamma' :: \Delta \verifies ([\![U]\!] \otimes \Id_\Delta)(p)$}
\end{prooftree}
\paragraph{Compatible Preservation} Similarly, we have a commutation rule with measurements yielding compatible outcomes\footnote{In terms of quantum states, this corresponds to the commutation of orthogonal projections on two compatible subspaces.}: if $p$ is impossible at $\Gamma$ and if $\Gamma' = \Mes_q(\Gamma)$ with $q$ compatible with $p$, then $p$ is also impossible at $\Gamma'$. Again, allowing the measurement of $q$ to occur only on a part of $\Gamma$, we~get:

\begin{prooftree}
\AxiomC{$\mf C \vdash \Gamma :: \Delta \verifies p$}
\AxiomC{\hspace{-5pt}$\Gamma' = \Mes_q(\Gamma) \in \mf C$\hspace{-5pt}}
\AxiomC{$p \, \comp \, q \otimes \!\top $}
\RightLabel{CP}
\TrinaryInfC{$\mf C \vdash \Gamma' :: \Delta \verifies p$}
\end{prooftree}


\paragraph{Extending slices}

If a circuit $\mf C$ verifies $p$ at $\Gamma$, this means that cutting the circuit at $\Gamma$ and inserting a measurement of $\Gamma$ with outcome $p^\bot$ leads to an impossible circuit. But in that case, if we cut $\mf C$ along a larger slice $\Gamma :: \Delta$ and insert a measurement of $\Gamma :: \Delta$ with outcome $p^\bot \otimes \! \top$ (the tensor product with $\top$ acting as some form of padding), it is clear that the circuit remains impossible. This leads to the following extension \emph{Tens} (for tensor product) rule:

\begin{prooftree}
\AxiomC{$\mf C \vdash \Gamma \verifies p $}
\RightLabel{Tens}
\UnaryInfC{\vphantom{$\bigl(_($}$\mf C \vdash \Gamma :: \Delta \verifies p \otimes \! \top$}
\end{prooftree}

\paragraph{Permutations of a slice} It is possible, given a slice, to permute the s-nodes it contains. This way, one can obtain a new slice which has exactly the same s-nodes as previously. It should thus be possible to do so, and the next rule will enable this, by indicating how the verification of property is modified by a permutation of the slice.

We will only consider one type of permutation, changing a slice of the form $\Gamma :: \Delta :: \Xi$ to $\Delta :: \Gamma :: \Xi$, which we call a \emph{block swap}. It is easy to verify that block swaps do generate all the possible permutations.
Now, if $\{e_i\}$ (resp. $\{f_j\}$, $\{g_k\}$) is an orthonormal basis of $\mb C^{d(\Gamma)}$ (resp. $\mb C^{d(\Delta)}$, $\mb C^{d(\Xi)}$), then the action of the block swap corresponds to mapping
$ |e_i \otimes f_j \otimes g_k\ket$ to $|f_j \otimes e_i \otimes g_k\ket$.
If we define
$$ \mathrm{swap}_{\Gamma,\Delta,\Xi}(P) \stackrel \Delta = \Bigl\{\sum_{i=1}^{d(\Gamma)} \sum_{j=1}^{d(\Delta)} \sum_{k=1}^{d(\Xi)} \mathopen|f_j \otimes e_i \otimes g_k\ket \bra e_i\otimes f_j \otimes g_k | u \ket \Bigm| u \in P \Bigr\} $$
we then obtain the desired rule for formalizing such permutations:
\begin{prooftree}
\AxiomC{$\mf C \vdash \Gamma :: \Delta :: \Xi \verifies p $}
\RightLabel{Swap}
\UnaryInfC{$\mf C \vdash \Delta :: \Gamma :: \Xi \verifies \mathrm{swap}_{\Gamma, \Delta, \Xi}(p) $}
\end{prooftree}
This ends the first version of our formalism, which is summarized in figure~\ref{fig:first_logic}.

\begin{figure}

\begin{prooftree}
\AxiomC{$\Gamma = \Mes_p(\blank) \in \mf C$}
\RightLabel{NC}
\UnaryInfC{$\mf C \vdash \Gamma \verifies p$}
\end{prooftree}

\begin{prooftree}
\AxiomC{$\vphantom{\bigl\{}$}
\RightLabel{Top}
\UnaryInfC{$\mf C \vdash \Gamma \verifies \top$}
\end{prooftree}

\begin{prooftree}
\AxiomC{$\fall {p \in \mc O} \mf C \vdash \Gamma \verifies p^\bot$}
\RightLabel{Mes}
\UnaryInfC{$\mf C \vdash \Imp$}
\end{prooftree}

\begin{prooftree}
\AxiomC{$\mf C \vdash \Gamma \verifies p$}
\AxiomC{$p \leq q$}
\RightLabel{Ord}
\BinaryInfC{$\mf C \vdash \Gamma \verifies q$}
\end{prooftree}

\begin{prooftree}
\AxiomC{$\mf C \vdash \Gamma \verifies p$}
\AxiomC{$\mf C \vdash \Gamma \verifies q$}
\AxiomC{$p \comp q$}
\RightLabel{CM}
\TrinaryInfC{$\mf C \vdash \Gamma \verifies p \wedge q$}
\end{prooftree}

\begin{prooftree}
\AxiomC{$\mf C \vdash \Gamma \verifies p$}
\RightLabel{Tens}
\UnaryInfC{$\mf C \vdash \Gamma :: \Delta \verifies p \otimes \! \top$}
\end{prooftree}


\begin{prooftree}
\AxiomC{$\mf C \vdash \Gamma :: \Delta \verifies p$}
\AxiomC{$\Gamma' = \Mes_q(\Gamma) \in \mf C $}
\AxiomC{$p \, \comp \, q \otimes \!\top $}
\RightLabel{CP}
\TrinaryInfC{$\mf C \vdash \Gamma' :: \Delta \verifies p$}
\end{prooftree}

\begin{prooftree}
\AxiomC{$\mf C \vdash \Gamma :: \Delta \verifies p$}
\AxiomC{$\Gamma' = U(\Gamma) \in \mf C$}
\RightLabel{Uni}
\BinaryInfC{$\mf C \vdash \Gamma' :: \Delta \verifies ([\![U]\!] \otimes \Id_{\Delta})(p)$}
\end{prooftree}

\begin{prooftree}
\AxiomC{$\mf C \vdash \Gamma :: \Delta :: \Xi \verifies p$}
\RightLabel{Swap}
\UnaryInfC{$\mf C \vdash \Delta :: \Gamma :: \Xi \verifies \mathrm{swap}_{\Gamma, \Delta, \Xi}(p)$}
\end{prooftree}

\caption{Our Logic, first version} \label{fig:first_logic}
\end{figure}

\section{Logical Variations}

We will now present a few results which will simplify this logic some more general rules.

\subsection{The Mes rule, revisited} From the possibilistic non-contextuality rules \emph{Ord} and \emph{Compatible Meet}, it is clear that if~$\{p_1, \ldots, p_n\}$ are mutually compatible elements (so that they all belong to a single boolean subalgebra), then it is equivalent to have
$$ \fall {i \in [\![1, n]\!]} \mf C \vdash \Gamma \verifies p_i 
\qquad \hbox{and} \qquad
\mf C \vdash \Gamma \verifies \bigwedge_i p_i $$
In particular, considering an observable $\mc O = \{ p_1, \ldots, p_n\}$, all the outcomes are mutually compatibles and
$$ \bigwedge_i p_i^\bot = \Bigl(\bigvee_i p_i\Bigr)^\bot = \top^\bot = \bot $$
As a consequence, the \emph{Mes} rule can equivalently be replaced~by
\begin{prooftree}
\AxiomC{$\mf C \vdash \Gamma \verifies \bot\vphantom)$}
\RightLabel{Mes}
\UnaryInfC{$\mf C \vdash \Imp\vphantom)$}
\end{prooftree}


\subsection{The Sasaki Rule}

Suppose now that a graph $\mf C$ is such that $\mf C \vdash \Gamma :: \Delta \verifies p$ and $\Gamma' = \Mes_q(\Gamma)$. We will prove that we have
$$ \mf C \vdash \Gamma' :: \Delta \verifies p \sas (q \otimes \!\top) $$
where the Sasaki operator $p \sas q$ is defined~as
$$ p \sas q \stackrel \Delta {\, = \,} q \wedge \bigl(p \vee q^\bot\bigr) $$
Equivalently, we claim that the following new rule is valid in our logic:
\begin{prooftree}
\AxiomC{$\Gamma :: \Delta \verifies p$}
\AxiomC{$\Gamma' = \Mes_q(\Gamma)$}
\RightLabel{Sas}
\BinaryInfC{$\Gamma' :: \Delta \verifies p \sas (q \otimes \! \top)$}
\end{prooftree}
To show this, we provide a proof in the form of a \emph{proof tree}, where we stack different rules to express chains of reasoning. It reads from top to bottom, where topmost lines correspond to hypotheses, and the bottom line is the conclusion.

\begin{prooftree}
\AxiomC{$\Gamma' = \Mes_q(\Gamma)$}
\RightLabel{NC}
\UnaryInfC{$\Gamma' \verifies q$}
\RightLabel{Tens}
\UnaryInfC{$\Gamma' :: \Delta \vdash q \otimes \! \top$}
\AxiomC{$\Gamma :: \Delta \verifies p$}
\RightLabel{Ord}
\UnaryInfC{$\Gamma :: \Delta \verifies p \vee (q \otimes \! \top)^\bot$}
\AxiomC{$\Gamma' = \Mes_q(\Gamma)$}
\RightLabel{CP}
\BinaryInfC{$\Gamma' :: \Delta \verifies p \vee (q \otimes \! \top)^\bot$}
\RightLabel{CM}
\BinaryInfC{$\Gamma' :: \Delta \verifies \underbrace{(q \otimes \! \top) \wedge \bigl(p \vee (q \otimes \! \top)^\bot\bigr)}_{= \, p \, \sas \, (q \, \otimes \, \! \top)}$}
\end{prooftree}

Let us now show that in the presence of the \emph{Top} and \emph{Ord} rules, the \emph{Sas} rule can replace both the \emph{Non Contradiction} and \emph{Compatible Preservation} rules. We start with the \emph{Non Contradiction} rule, which definition~is
\begin{prooftree}
\AxiomC{$\Gamma = \Mes_p(\blank)$}
\RightLabel{NC}
\UnaryInfC{$\Gamma \verifies p$}
\end{prooftree}
The behaviour of this rule can be obtained using the \emph{Top} and \emph{Sas} rules as follows:
\begin{prooftree}
\AxiomC{$\vphantom{\bigl\{}$}
\RightLabel{Top}
\UnaryInfC{$\Gamma \verifies \top\vphantom($}
\AxiomC{$\Gamma' = \Mes_p(\Gamma)$}
\RightLabel{Sas}
\BinaryInfC{$\Gamma' \verifies p$}
\end{prooftree}
where we have used the fact~that $ \top \sas p = p \wedge (\top \vee p^\bot) = p \wedge \! \top = p $.

Regarding the \emph{Compatible Preservation}, we have to prove that $\Gamma'::\Delta \verifies p$ provided that $\Gamma :: \Delta \verifies p$ and $\Gamma' = \Mes_q(\Gamma)$ with $p$ compatible with $q \otimes \! \top$. This can be achieved the following~way:
\begin{prooftree}
\AxiomC{$\Gamma :: \Delta \verifies p$}
\AxiomC{$\Gamma' = \Mes_q(\Gamma)$}
\RightLabel{Sas}
\BinaryInfC{$\Gamma' :: \Delta \verifies p \sas (q \otimes \! \top)$}
\RightLabel{Ord}
\UnaryInfC{$\Gamma' :: \Delta \verifies p$}
\end{prooftree}
In particular, since $p$ is compatible with $q \otimes \! \top$, we have
$$p \sas (q \otimes \! \top) = p \wedge (q \otimes \! \top) \leq p.$$
As a result, both the \emph{Non Contradiction} and \emph{Compatible Preservation} rules can be replaced by the \emph{Sas} rule we have just introduced.

\subsection{Generalizing the Compatible Meet Rule}

We now show that the compatibility requirement in the \emph{Compatible Meet} rule can be dropped. 
A first step towards this is the following result \cite{Brunet07PLA,Brunet:2009} (we recall that $[\varphi]$ denotes the subspace spanned by a non-zero vector~$|\varphi\ket$):

\begin{proposition} \label{prop:bks}
In a quantum circuit $\mf C$, if a slice $\Gamma$ of dimension at least $3$ is such that $\mf C \vdash \Gamma \verifies [\varphi]$ and $\mf C \vdash \Gamma \verifies [\psi]$ with $[\varphi] \neq [\psi]$, then $\mf C \vdash \Imp$.
\end{proposition}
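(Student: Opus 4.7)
The plan is to split the argument according to whether $[\varphi]$ and $[\psi]$ are orthogonal.

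In the easy case $[\varphi] \perp [\psi]$, the two rays commute as projections and are therefore compatible in the orthomodular lattice $L_{d(\Gamma)}$. Applying the \emph{CM} rule to the two hypotheses yields $\mf C \vdash \Gamma \verifies [\varphi] \wedge [\psi]$, and since the rays are distinct and orthogonal, their meet is $\bot$. The revised \emph{Mes} rule from Section~4.1, which states that $\Gamma \verifies \bot$ entails $\Imp$, then gives $\mf C \vdash \Imp$.

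The hard case is when $[\varphi] \not\perp [\psi]$: the rays are no longer compatible, so \emph{CM} cannot be applied to them directly. The idea is to first enlarge the family of verified subspaces via \emph{Ord}, observing that every subspace containing $[\varphi]$ or $[\psi]$ is verified, and then to produce new verified subspaces by taking compatible meets of such enlargements. Concretely, the assumption $d(\Gamma) \geq 3$ guarantees a nonzero $\chi \in ([\varphi] \vee [\psi])^\bot$, so that one can choose pairs of unit vectors $u, v$ with $u \perp \varphi$, $v \perp \psi$, and $u \perp v$; the hyperplanes $\{u\}^\bot$ and $\{v\}^\bot$ are then verified by \emph{Ord}, are compatible (since their respective orthogonal complements are mutually orthogonal rays), and their meet $\{u,v\}^\bot$ is verified by \emph{CM}, producing a new verified subspace not obtainable by \emph{Ord} alone.

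The main obstacle is then combinatorial: iterating this derivation yields an ever-growing family of verified subspaces, and one must show that this family eventually triggers the \emph{Mes} rule by producing an observable $\mc O = \{p_1, \ldots, p_n\}$ with $\mf C \vdash \Gamma \verifies p_i^\bot$ for every $i$. This is a Kochen-Specker-type uncolorability argument: the derived verified elements form a configuration that admits no consistent $\{0,1\}$-valuation on the lattice of subspaces. The detailed geometric construction, which one would reduce to the 3-dimensional case by restricting attention to a 3-dimensional subspace containing both $[\varphi]$ and $[\psi]$, is carried out in the cited works~\cite{Brunet07PLA,Brunet:2009}.
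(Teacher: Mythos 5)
Your proposal is correct and follows essentially the same route as the paper's own (sketched) proof: both bootstrap from the two verified rays by using \emph{Ord} to pass to larger, mutually compatible subspaces and \emph{CM} to take their meets, thereby generating new verified elements until $\bot$ is verified and \emph{Mes} yields $\Imp$, with the detailed geometric/combinatorial core deferred to the same references \cite{Brunet07PLA,Brunet:2009}. The only cosmetic difference is that the paper packages the iteration as two sequences of verified rays $([\varphi_k])$, $([\psi_k])$ terminating in an orthogonal pair, whereas you make the elementary generating step (hyperplanes $[u]^\bot$, $[v]^\bot$ and their compatible meet) explicit and phrase the termination as a Kochen--Specker-type obstruction.
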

\begin{proof}[Sketch of Proof]
If $\Gamma \verifies [\varphi]$ and $\Gamma \verifies [\psi]$ with $[\varphi] \neq [\psi]$, then it is possible to construct two finite sequences $([\varphi_k])_{0 \leq k \leq n}$ and $([\psi_k])_{0 \leq k \leq n}$ such that for all $k$ between $0$ and $n$,
$$ \Gamma \verifies [\varphi_k] \cand \Gamma \verifies [\psi_k] $$
and, moreover, $[\varphi_n] \leq [\psi_n]^\bot$. As a consequence, using the \emph{Compatible Meet} rule, we deduce that $\mf C \vdash \Gamma \verifies [\varphi_n] \wedge [\psi_n]$ but $[\varphi_n] \wedge [\psi_n] = \bot $ so that $\mf C \vdash \Imp$.
\end{proof}

Let us now define, given a quantum circuit $\mf C$ and a slice $\Gamma \in \Slice(\mf C)$, the~set
$$ S_{\mf C, \Gamma} = \{p \in L_{d(\Gamma)} \mid \mf C \vdash \Gamma \verifies p\}$$
It is clear that $S_{\mf C, \Gamma}$ is not empty, as it contains $\top$. According to the \emph{Ord} rule, it is closed upwards (that is, if $p \in S_{\mf C, \Gamma}$ and $p \leq q$, then $q \in S_{\mf C, \Gamma}$) and according to \emph{Compatible Meet} it is, indeed, stable by compatible meet. From proposition~\ref{prop:bks}, it cannot contain two distinct atoms unless it contains $\bot$ (in which case the circuit is impossible). We also remark that $S_{\mf C, \Gamma}$ has a finite height (as it is already the case for $L_{d(\Gamma)}$) so that for every element $p \in S_{\mf C, \Gamma}$, there is at least one element $q \in S_{\mf C, \Gamma}$ such that $q \leq p$ and which is minimal in $S_{\mf C, \Gamma}$.

\begin{proposition}[\cite{Brunet:WBR}] 
If $d(\Gamma) \geq 3$, the set $S_{\mf C, \Gamma}$ cannot have two distinct minimal elements.
\end{proposition}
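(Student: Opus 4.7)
The plan is to argue by contradiction and ultimately reduce to Proposition~\ref{prop:bks}. Suppose $p$ and $q$ are two distinct minimal elements of $S_{\mf C, \Gamma}$.

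First I would observe that $p$ and $q$ must be incompatible. If $p \comp q$, the \emph{CM} rule would put $p \wedge q \in S_{\mf C, \Gamma}$; since $p \wedge q \leq p$, minimality of $p$ would then force $p = p \wedge q$, hence $p \leq q$, and symmetrically $q \leq p$, contradicting $p \neq q$. So $p \not\comp q$.

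The core step is to use the \emph{Sas} rule in an extension of $\mf C$ to produce two distinct verifiable atoms at a common slice of dimension $\geq 3$, and then invoke Proposition~\ref{prop:bks}. Concretely, I would form
$$ \mf C' \;:=\; \mf C|_\Gamma \cup \bigl\{\Gamma' = \Mes_r(\Gamma)\bigr\} $$
for a well-chosen projection~$r$. Since $\mf C'|_\Gamma = \mf C|_\Gamma$, the hypotheses $\mf C \vdash \Gamma \verifies p$ and $\mf C \vdash \Gamma \verifies q$ transfer to~$\mf C'$, so the \emph{Sas} rule yields
$$ \mf C' \vdash \Gamma' \verifies p \sas r \qquad \text{and} \qquad \mf C' \vdash \Gamma' \verifies q \sas r. $$
The natural first choice is $r = (p \wedge q)^\bot$, for which $p \sas r$ and $q \sas r$ are the orthocomplements of $p \wedge q$ inside $p$ and inside $q$ respectively, and these two subspaces are distinct because $p \neq q$. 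When they are already (distinct) atoms of $L_{d(\Gamma)}$ -- as happens, for instance, for two distinct incompatible planes in~$L_3$ -- Proposition~\ref{prop:bks} applies to~$\mf C'$ at~$\Gamma'$ and actually produces $\mf C' \vdash \Gamma' \verifies \bot$. Unfolding the definition of verification, this says $\mf C|_\Gamma \cup \{\Mes_r(\Gamma)\} \vdash \Imp$, equivalently $\mf C \vdash \Gamma \verifies r^\bot = p \wedge q$. Thus $p \wedge q \in S_{\mf C, \Gamma}$; but $p \wedge q < p$ (again because $p \neq q$), contradicting the minimality of~$p$.

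When the first Sasaki projections are not yet atoms -- e.g.\ if $\dim p > \dim(p \wedge q)+1$, or if $p \wedge q = \bot$ -- the plan is to iterate the construction on the larger slice $\Gamma'$, using further auxiliary measurements to drive the successive Sasaki projections of $p$ and~$q$ down to one-dimensional subspaces, and to pull the impossibility back through the chain of extensions as above. The hardest part will be precisely this iterative refinement: selecting the sequence of measurements so that the refined projections terminate at distinct atoms while remaining verifiable, and carefully tracking verifications back to~$\mf C$ itself at each step. Purely abstract closure under \emph{Ord} and \emph{CM} is \emph{not} enough -- one can cook up $S \subseteq L_3$ consisting of two distinct incompatible planes together with~$\top$ that satisfies those closures yet has two minimal elements -- so the argument must genuinely exploit the ability to extend~$\mf C$ through the \emph{Sas} rule. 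This geometric–combinatorial step, where the hypothesis $d(\Gamma) \geq 3$ is used essentially (much as in Proposition~\ref{prop:bks} itself), is the content of the construction carried out in~\cite{Brunet:WBR}.
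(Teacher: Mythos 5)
You have the right architecture --- establish that $p$ and $q$ are incompatible via the \emph{CM} rule, insert an auxiliary measurement $\Gamma' = \Mes_r(\Gamma)$ into $\mf C|_\Gamma$, use the \emph{Sas} rule to obtain two distinct verifiable atoms at $\Gamma'$, invoke Proposition~\ref{prop:bks} there, and pull the resulting impossibility back to $\mf C \vdash \Gamma \verifies r^\bot$ to contradict minimality --- and this is indeed the paper's architecture. The genuine gap is the choice of $r$. Your candidate $r = (p \wedge q)^\bot$ makes no progress in exactly the cases that matter: whenever $p \wedge q = \bot$ (e.g.\ two incompatible planes of $\mb C^4$ in general position, or a plane and a line of $\mb C^3$ meeting only at the origin) you get $r = \top$, so $p \sas r = p$ and $q \sas r = q$ and nothing has been gained; and when $p \wedge q \neq \bot$ the relative orthocomplements $p \wedge (p\wedge q)^\bot$ and $q \wedge (p \wedge q)^\bot$ are atoms only in special circumstances. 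You acknowledge this and propose an iterative refinement by further auxiliary measurements, but you do not give that construction and explicitly defer it to \cite{Brunet:WBR}; as written, the main case of the proposition is therefore not proved.

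The paper closes this gap with a single, non-iterative choice of $r$: for incompatible $p$ and $q$ there exist \emph{distinct} rays $[\varphi] \leq p$ and $[\psi] \leq q$ such that, setting $c = [\varphi] \vee [\psi]$, one has $p \sas c = [\varphi]$ and $q \sas c = [\psi]$. Measuring $c$ at $\Gamma$ then immediately produces two distinct verifiable atoms at $\Gamma'$, so Proposition~\ref{prop:bks} yields $\mf C|_\Gamma \cup \{\Gamma' = \Mes_c(\Gamma)\} \vdash \Imp$, i.e.\ $\mf C \vdash \Gamma \verifies c^\bot$; moreover $[\varphi] = p \sas c \geq p \wedge c \geq [\varphi]$ forces $p$ and $c$ (hence $p$ and $c^\bot$) to be compatible, so $p \sas c^\bot = p \wedge c^\bot < p$ lies in $S_{\mf C,\Gamma}$, the desired contradiction. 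The one fact your proof is missing is precisely the existence of such a pair $([\varphi],[\psi])$ for incompatible subspaces $p$ and $q$; it is a purely geometric statement about the Hilbert lattice, independent of the logic, and once you have it your argument goes through with no iteration at all.
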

\begin{proof}[Sketch of Proof]
If there were two such minimal elements $p$ and $q$, then we first remark that they cannot be compatible, since otherwise, from the \emph{Compatible Meet} rule, we would have $p \wedge q \in S_{\mf C, \Gamma}$, contradicting their minimality.

Being incompatible, it can be shown that there exists $[\varphi] \leq p$ and $[\psi] \leq q$ such that $[\varphi] \neq [\psi]$ and, putting $c = [\varphi] \vee [\psi]$, such that $p \sas c = [\varphi]$ and $q \sas c = [\psi]$.
Define now $\mf C' = \mf C|_\Gamma \cup \bigl\{\Gamma' = \Mes_c(\Gamma)\bigr\}$. We have $\mf C' \vdash \Gamma' \verifies p \sas c = [\varphi]$ and $\mf C' \vdash \Gamma' \verifies q \sas c = [\psi]$ with $[\varphi] \neq [\psi]$. As a consequence of proposition~\ref{prop:bks}, 
$$ \mf C|_\Gamma \cup \bigl\{\Gamma' = \Mes_c(\Gamma)\bigr\} \vdash \Imp $$
so that $\mf C \vdash \Gamma \verifies c^\bot$ and hence $\mf C \vdash \Gamma \verifies p \sas c^\bot$ using the \emph{Sas} rule. But since $p$ and $c$ are compatible, so are $p$ and $c^\bot$ and $p \sas c^\bot < p$, which contradicts the minimality of $p$ in $S_{\mf C,\Gamma}$.
\end{proof}

\begin{theorem} \label{th:wbr}
If $d(\Gamma) \geq 3$, then there exists an element $k_{\mf C}(\Gamma) \in L_{d(\Gamma)}$ such~that
$$ \fall {p \in L_{d(\Gamma)}} p \in S_{\mf C, \Gamma} \iff k_{\mf C}(\Gamma) \leq p $$
or, equivalently, such~that
$$ S_{\mf C, \Gamma} = k_{\mf C}(\Gamma)^\uparrow = \{ p \in L_{d(\Gamma)} \mid k_{\mf C}(\Gamma) \leq p\}. $$
In the following, $k_{\mf C}(\Gamma)$ will be called the \emph{epistemic state} of $\mf C$ at $\Gamma$.
\end{theorem}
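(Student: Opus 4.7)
The plan is to let $k_{\mf C}(\Gamma)$ be the unique minimal element of $S_{\mf C, \Gamma}$ provided by the preceding proposition, and then show the two inclusions defining $S_{\mf C, \Gamma} = k_{\mf C}(\Gamma)^\uparrow$ directly from the properties of $S_{\mf C, \Gamma}$ established just above.

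First I would invoke the remarks preceding the previous proposition: $S_{\mf C, \Gamma}$ is nonempty (it contains $\top$ by the \emph{Top} rule) and has finite height since it sits inside $L_{d(\Gamma)}$, so every element of $S_{\mf C, \Gamma}$ dominates some element that is minimal in $S_{\mf C, \Gamma}$. Combined with the hypothesis $d(\Gamma) \geq 3$ and the preceding proposition, which says there cannot be two distinct minimal elements, this guarantees that $S_{\mf C, \Gamma}$ has exactly one minimal element. Define $k_{\mf C}(\Gamma)$ to be that element; by construction $k_{\mf C}(\Gamma) \in S_{\mf C, \Gamma}$.

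For the inclusion $k_{\mf C}(\Gamma)^\uparrow \subseteq S_{\mf C, \Gamma}$, I would apply the \emph{Ord} rule: if $k_{\mf C}(\Gamma) \leq p$, then from $\mf C \vdash \Gamma \verifies k_{\mf C}(\Gamma)$ we deduce $\mf C \vdash \Gamma \verifies p$, i.e.\ $p \in S_{\mf C, \Gamma}$. For the converse inclusion $S_{\mf C, \Gamma} \subseteq k_{\mf C}(\Gamma)^\uparrow$, take any $p \in S_{\mf C, \Gamma}$; by finite height of $S_{\mf C, \Gamma}$ there exists a minimal element $q \in S_{\mf C, \Gamma}$ with $q \leq p$, and by uniqueness of the minimal element we must have $q = k_{\mf C}(\Gamma)$, hence $k_{\mf C}(\Gamma) \leq p$.

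There is essentially no obstacle here beyond citing the structural results already proved: the preceding proposition does all the real work by ruling out distinct minimal elements, and the theorem is then just the observation that an upward-closed subset of a finite-height poset with a unique minimal element is precisely the principal up-set generated by that element. The only subtlety worth mentioning is the degenerate case where $\bot \in S_{\mf C, \Gamma}$ (the circuit is then impossible by the revised \emph{Mes} rule), which still fits the statement: $k_{\mf C}(\Gamma) = \bot$ and $S_{\mf C, \Gamma} = L_{d(\Gamma)}$.
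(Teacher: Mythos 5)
Your proof is correct and follows essentially the same route as the paper: take the unique minimal element guaranteed by the preceding proposition, use finite height to put a minimal element below any $p \in S_{\mf C, \Gamma}$ and identify it with $k_{\mf C}(\Gamma)$, and use upward closure (the \emph{Ord} rule) for the reverse inclusion. The paper's version is terser (it leaves the \emph{Ord}-rule direction implicit), but the argument is the same.
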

\begin{proof}
Let $e$ be a minimal element of $S_{\mf C, \Gamma}$. For all $p \in S_{\mf C, \Gamma}$, considering a minimal element $f \in S_{\mf C, \Gamma}$ below $p$ (so that $f \leq p$) we have, by unicity of a minimal element, $e = f$ and hence $e \leq p$. This implies that $S_{\mf C, \Gamma} = e^\uparrow$.
\end{proof}

One might worry about the condition $d(\Gamma) \geq 3$. This can, however, be easily circumvented the following way: given a circuit $\mf C$, we consider that it is possible to add an additional s-node $\gamma$ (of dimension at least 3) to~$\mf C$ not connected to any o-node. With this new circuit $\mf C \cup \{\gamma\}$ we can now consider that a verification statement $\mf C \vdash \Gamma \verifies p$ has to be understood~as 
$$ \mf C \cup \{\gamma\} \vdash \Gamma :: [\gamma] \verifies p \otimes \! \top_{\!\{\gamma\}} $$
in which case $d(\Gamma :: [\gamma]) \geq 3$. We will assume that it is always possible to do such a circuit transformation. This way, theorem~\ref{th:wbr} always applies and the \emph{Compatible Meet} rule can be replaced by the more general \emph{Meet} rule:

\begin{prooftree}
\AxiomC{$\mf C \vdash \Gamma \verifies p$}
\AxiomC{$\mf C \vdash \Gamma \verifies q$}
\RightLabel{Meet}
\BinaryInfC{$\mf C \vdash \Gamma \verifies p \wedge q$}
\end{prooftree}




This ends our discussion leading to the final version of our logic, which is presented in figure~\ref{fig:last_logic}. However, one can remark that the \emph{Mes}. This disparition will be discussed in section~\ref{sub:meaning}.

\begin{figure}
\begin{prooftree}
\AxiomC{\vphantom{$($}}
\RightLabel{Top}
\UnaryInfC{$\Gamma \verifies \top \vphantom)$}
\end{prooftree}


\begin{prooftree}
\AxiomC{$\Gamma \verifies p \vphantom)$}
\AxiomC{$p \leq q$}
\RightLabel{Ord}
\BinaryInfC{$\Gamma \verifies q \vphantom)$}
\end{prooftree}

\begin{prooftree}
\AxiomC{$\Gamma \verifies p$}
\AxiomC{$\Gamma \verifies q$}
\RightLabel{Meet}
\BinaryInfC{$\Gamma \verifies p \wedge q \vphantom)$}
\end{prooftree}

\begin{prooftree}
\AxiomC{$\Gamma :: \Delta :: \Xi \verifies p$}
\RightLabel{Swap}
\UnaryInfC{$\Delta :: \Gamma :: \Xi \verifies \mathrm{swap}_{\Gamma, \Delta, \Xi}(p)$}
\end{prooftree}

\begin{prooftree}
\AxiomC{$\Gamma :: \Delta \verifies p$}
\AxiomC{$\Gamma' = \Mes_q(\Gamma)$}
\RightLabel{Sas}
\BinaryInfC{$\Gamma' :: \Delta \verifies p \sas (q \otimes \! \top)$}
\end{prooftree}

\begin{prooftree}
\AxiomC{$\Gamma :: \Delta \verifies p$}
\AxiomC{$\Gamma' = U(\Gamma)$}
\RightLabel{Uni}
\BinaryInfC{$\Gamma' :: \Delta \verifies (\bracc U \otimes \mathrm{Id}_{\Delta})(p)$}
\end{prooftree}

\begin{prooftree}
\AxiomC{$\Gamma \verifies p$}
\RightLabel{Tens}
\UnaryInfC{$\Gamma :: \Delta \verifies p \otimes \! \top$}
\end{prooftree}


\caption{Our logic, final version} \label{fig:last_logic}
\end{figure}

\subsection{Quantum Teleportation}

In order to illustrate the expressivity of our formalism, let us apply it to the circuit presented in figure~\ref{fig:T_uncut}. It can be seen as a teleportation scheme \cite{Bennett93Teleportation} with the creation of a Bell pair (at $(B_3, C_2)$) and then teleporting $A_1$ at $C_3$.

It can be remarked that the choice of the operator applied between $C_2$ and $C_3$ is determined by the outcome of the measurements between $A_3$ and $A_4$, and between $B_4$ and $B_5$. This circuit thus only represents one of the four possibilities.

Let us first focus on the preparation of the Bell pair. We have:
\begin{prooftree}
\AxiomC{\vphantom(}
\RightLabel{Top}
\UnaryInfC{$B_0 \verifies \top$}
\AxiomC{$B_1 = \Mes_{[1]}(B_0)$}
\RightLabel{Sas}
\BinaryInfC{$B_1 \verifies [1]$}
\RightLabel{Tens}
\UnaryInfC{$B_1, C_1 \verifies [1] \otimes \top$}
\AxiomC{\vphantom(}
\RightLabel{Top}
\UnaryInfC{$C_0 \verifies \top$}
\AxiomC{$C_1 = \Mes_{[1]}(C_0)$}
\RightLabel{Sas}
\BinaryInfC{$C_1 \verifies [1]$}
\RightLabel{Tens}
\UnaryInfC{$C_1, B_1 \verifies [1] \otimes \top$}
\RightLabel{Swap}
\UnaryInfC{$B_1, C_1 \verifies \top \otimes [1]$}
\RightLabel{Meet}
\BinaryInfC{$B_1, C_1 \verifies [1] \otimes [1]$}
\end{prooftree}
with $[1] \otimes [1] = [11]$. From this, we deduce
\begin{prooftree}
\AxiomC{$B_1, C_1 \verifies [11]$}
\AxiomC{$B_2 = H(B_1)$}
\RightLabel{Uni}
\BinaryInfC{$B_2, C_1 \verifies [|01\ket - |11\ket]$}
\end{prooftree}
since $[|01\ket - |11\ket] = ([\![H]\!] \otimes \Id) [11]$,~and
\begin{prooftree}
\AxiomC{$B_2, C_1 \verifies [|01\ket - |11\ket]$}
\AxiomC{$B_3, C_2 = \mathrm{CNot}(B_2, C_1)$}
\RightLabel{Uni}
\BinaryInfC{$B_3, C_2 \verifies [|01\ket - |10\ket]$}
\end{prooftree}
We thus have shown that $ B_3, C_2 \verifies [|01\ket - |10\ket] $. 

Let us now move to the second part of the circuit, and suppose that particle~$A_1$ is possibly entangled with another quantum system denoted~$R_1$ in such a way that they jointly verify some property~$p$:
$$ A_1, R_1 \verifies p $$
Let us write $p = \Vect\{|0 \, a_i\ket + |1 \, b_i\ket\}_{i \in \mc I}$. First, combining $B_3, C_2 \verifies [|01\ket - |10\ket]$ with $A_1, R_1 \verifies \Vect\{|0 \, a_i\ket + |1 \, b_i\ket\}_{i \in \mc I}$ using adequate \emph{Tens}, \emph{Meet} and \emph{Swap} rules, we obtain:
$$ B_3, C_2, A_1, R_1 \verifies \Vect\bigl\{|0 1 0 \, a_i\ket + |0 1 1 \, b_i\ket - |1 0 0 \, a_i\ket - |1 0 1 \, b_i\ket\bigr\} $$
Since $A_2, B_4 = \mathrm{CNot}(A_1, B_3)$, we deduce
$$ B_4, C_2, A_2, R_1 \verifies \Vect\bigl\{|0 1 0 \, a_i\ket + |1 1 1 \, b_i\ket - |1 0 0 \, a_i\ket - |0 0 1 \, b_i\ket\bigr\} $$
Applying an Hadamard gate from $A_2$ to $A_3$ leads~to
\begin{multline*}
B_4, C_2, A_3, R_1 \verifies \Vect \bigl\{|0 1 0 \, a_i\ket + |0 1 1 \, a_i\ket + |1 1 0 \, b_i\ket \\ - |1 1 1 \, b_i\ket - |1 0 0 \, a_i\ket - |1 0 1 \, a_i\ket - |0 0 0 \, b_i\ket + |0 0 1 \, b_i\ket \bigr\}
\end{multline*}
Then, measuring $A_3$ with outcome $[1]$~implies
$$ B_4, C_2, A_4, R_1 \verifies \Vect\bigl\{|0 1 1 \, a_i\ket - |1 1 1 \, b_i\ket - |1 0 1 \, a_i\ket + |0 0 1 \, b_i\ket \bigr\} $$
and finally, measuring $B_4$ with outcome $[0]$,
$$ B_5, C_2, A_4, R_1 \verifies \Vect\bigl\{|0 1 1 \, a_i\ket + |0 0 1 \, b_i\ket \bigr\} $$
so that
$$ B_5, A_4, C_2, R_1 \verifies \Vect\bigl\{|0 1 1 \, a_i\ket + |0 1 0 \, b_i\ket \bigr\} $$
But $|0 1 1 \, a_i\ket + |0 1 0 \, b_i\ket = \mathopen|01\ket \otimes \bigl(\mathopen|1 \, a_i\ket + |0 \, b_i\ket\bigr)$, so that
\begin{prooftree}
\AxiomC{$B_5, A_4, C_2, R_1 \verifies [01] \otimes \Vect\bigl\{|1 \, a_i\ket + |0 \, b_i\ket \bigr\}$}
\RightLabel{Ord}
\UnaryInfC{$B_5, A_4, C_2, R_1 \verifies \top \! \otimes \Vect\bigl\{|1 \, a_i\ket + |0 \, b_i\ket \bigr\}$}
\end{prooftree}
Just apply a Not-gate to $C_2$ and we obtain:
$$ B_5, A_4, C_3, R_1 \verifies \top \! \otimes \Vect\bigl\{|0 \, a_i\ket + |1 \, b_i\ket \bigr\}, $$
that is $B_5, A_4, C_3, R_1 \verifies \top \otimes p$.

Thus we have shown that for all $p$, from $A_1, R_1 \verifies p$, we deduce
$$B_5, A_4, C_3, R_1 \verifies \top \! \otimes p.$$
Obviously, if different outcomes were found at $A_3$ and $B_4$, applying the corresponding operator at $C_2$ would lead to the same statement. This illustrates that after the application of the circuit, any property regarding the possibility of measurement outcomes verifies by the system $A_1, R_1$ has be ``transfered'' to~$C_3, R_1$. This is the rigourous expression, in terms of verification statements, of the fact that~$A$ \emph{seems} to have been teleported to~$C$: any property previously verifies by~$A$ is now verified by~$C$.

In this analysis, we stress again the fact that all our statements is of epistemic nature: verification statemens only deal with the possibility or impossibility of obtaining specific measurement outcomes.

\section{Some More Properties}


\subsection{Verification and Full Subgraphes}

First, let us study how the provability of verification statements is preserved when one considers full subgraphes of a circuit.

\begin{proposition}[Monotony] \label{prop:monotony}
Given two circuits $\mf C$ and $\mf C'$ such that $\mf C \contains \mf C'$, if $\mf C \vdash \Gamma \verifies p$, then $\mf C' \vdash \Gamma \verifies p$.
\end{proposition}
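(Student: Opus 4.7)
The plan is a straightforward structural induction on the derivation of $\mf C \vdash \Gamma \verifies p$ using the rules in figure~\ref{fig:last_logic}. Before starting, I would record two consequences of the hypothesis $\mf C \contains \mf C'$ that do essentially all the work: (i) since $\mf C \subseteq \mf C'$, every s-node and o-node of $\mf C$ is also a node of $\mf C'$, so any side condition of the form ``$\Gamma' = \Mes_q(\Gamma) \in \mf C$'' or ``$\Gamma' = U(\Gamma) \in \mf C$'' lifts verbatim to $\mf C'$; (ii) by the slice-preservation proposition stated just after the definition of slices, $\Slice(\mf C) \subseteq \Slice(\mf C')$, so every slice mentioned above or below a rule instance in $\mf C$ remains a slice in $\mf C'$.

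The induction then becomes mechanical. For the base case (Top), the conclusion $\Gamma \verifies \top$ relies only on $\Gamma \in \Slice(\mf C)$; by (ii) this gives $\Gamma \in \Slice(\mf C')$ and so $\mf C' \vdash \Gamma \verifies \top$ by the same rule. For every inductive rule (Ord, Meet, Tens, Swap, Sas, Uni) the premises split into verification statements about $\mf C$, to which the induction hypothesis applies directly, and side conditions that are either purely lattice-theoretic (such as $p \leq q$, or the $\mathrm{swap}_{\Gamma,\Delta,\Xi}$ and $([\![U]\!] \otimes \mathrm{Id}_\Delta)$ rewritings) and hence independent of the ambient circuit, or o-node membership assertions that lift by (i). Re-applying the same rule instance in $\mf C'$ yields the conclusion $\mf C' \vdash \Gamma \verifies p$.

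There is not really a hard step here; the only point that deserves any attention is the bookkeeping observation (ii), without which the conclusion of a rule in $\mf C'$ would not even be well-formed as a verification statement. The ``full'' in full subgraph is essential to (ii): if $\mf C'$ introduced new path connections among the s-nodes of $\mf C$, a slice $\Gamma$ of $\mf C$ might contain comparable nodes in $\mf C'$ and therefore cease to be a slice. Since fullness rules this out, and since the rules of figure~\ref{fig:last_logic} are ``local'' in the sense that each of them only inspects nodes and slices of the ambient circuit, the induction goes through without incident.
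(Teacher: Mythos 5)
Your proof is correct and is essentially the paper's own argument spelled out in full: the paper simply observes that since $\mf C \contains \mf C'$ implies $\Slice(\mf C) \subseteq \Slice(\mf C')$, any proof tree for $\Gamma \verifies p$ in $\mf C$ remains valid verbatim in $\mf C'$. Your rule-by-rule induction and your observations (i) and (ii) are exactly the bookkeeping behind that one-line claim.
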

\begin{proof}
This is a direct consequence of the fact that if $\mf C \contains \mf C'$, then one moreover has $\Slice(\mf C) \subseteq \Slice(\mf C')$, so that any proof of $\Gamma \vdash p$ in $\mf C$ is also valid in $\mf C'$.
\end{proof}

This result can be expressed in terms of epistemic states the following way:
$$ \mf C \contains \mf C' \implies \fall {\Gamma \in \Slice(\mf C)} k_{\mf C'}(\Gamma) \leq k_{\mf C}(\Gamma) $$

\begin{corollary}
If $\mf C'$ is possible and if $\mf C$ is a full subgraph of $\mf C'$, then $\mf C$ is also possible.
\end{corollary}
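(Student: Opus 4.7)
The plan is to obtain this corollary as the immediate contrapositive of the Monotony proposition (Proposition~\ref{prop:monotony}). First I would recall the characterization of impossibility built into our logic by the \emph{Mes} rule (in its streamlined form $\mf C \vdash \Gamma \verifies \bot \implies \mf C \vdash \Imp$): a circuit $\mf D$ is impossible exactly when there exists some slice $\Gamma \in \Slice(\mf D)$ such that $\mf D \vdash \Gamma \verifies \bot$. With this in hand, the corollary reduces to showing that if $\mf C$ is impossible then so is $\mf C'$.

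The argument itself is a one-liner. Assume for contradiction that $\mf C$ is impossible, so there is $\Gamma \in \Slice(\mf C)$ with $\mf C \vdash \Gamma \verifies \bot$. Because $\mf C \contains \mf C'$, the earlier proposition gives $\Slice(\mf C) \subseteq \Slice(\mf C')$, so $\Gamma$ is also a slice of $\mf C'$. Applying Proposition~\ref{prop:monotony} to the verification statement $\Gamma \verifies \bot$ then yields $\mf C' \vdash \Gamma \verifies \bot$, which means $\mf C'$ is impossible — contradicting the hypothesis. Equivalently, in terms of epistemic states, the inequality $k_{\mf C'}(\Gamma) \leq k_{\mf C}(\Gamma) = \bot$ forces $k_{\mf C'}(\Gamma) = \bot$.

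There is no real obstacle here: all the content sits in the Monotony proposition, whose proof simply observes that any derivation tree of $\mf C \vdash \Gamma \verifies p$ uses only slices of $\mf C$, which remain valid slices of $\mf C'$ by fullness, and rules whose side conditions (containment of an o-node, compatibility, order relations on $L_{d(\Gamma)}$) are insensitive to enlarging the ambient graph. The only thing worth flagging in the write-up is the characterization of impossibility as the existence of a slice verifying $\bot$, since without it one cannot turn the monotony of verification statements into monotony of impossibility.
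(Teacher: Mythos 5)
Your proposal is correct and follows exactly the route the paper intends: the corollary is stated without proof as an immediate consequence of Proposition~\ref{prop:monotony}, and your contrapositive argument (impossibility of $\mf C$ means some slice $\Gamma$ of $\mf C$ verifies $\bot$, which by monotony transfers to $\mf C'$ since $\Slice(\mf C) \subseteq \Slice(\mf C')$) is the intended derivation. Your remark that one must fix a characterization of impossibility via the \emph{Mes} rule is a fair point of care, not a deviation.
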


Let now determine a full subgraph of a circuit $\mf C$ which is sufficient for proving a statement of the form $\Gamma \verifies p$. In order to do this, let us introduce the notion of \emph{strong past}.

\begin{definition}[Strong Past]
Given a slice $\Gamma$ of a circuit $\mf C$, an s-node $n$ of $\mf C$ is in the \emph{strong past} of $\Gamma$ is every path going out from $n$ crosses $\Gamma$.
\end{definition}


\begin{proposition} \label{prop:sp_is_full}
For all $\Gamma \in \Slice(\mf C)$, we have $ \mathrm{sp}(\mf C, \Gamma) \contains \mf C$.
\end{proposition}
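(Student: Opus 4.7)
The plan is to unpack the full-subgraph condition $\contains$ and reduce it to a closure property of strong past under intermediate s-nodes on paths, then derive a contradiction from the antichain nature of $\Gamma$.

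First I would observe that the inclusion $\mathrm{sp}(\mf C,\Gamma) \subseteq \mf C$ is immediate from the definition, so the substantive content of $\contains$ is the preservation of the reachability order: for $a,b \in \mathrm{sp}(\mf C,\Gamma)$, one needs $a \transition{\mathrm{sp}(\mf C,\Gamma)} b \iff a \transition{\mf C} b$. The left-to-right direction is automatic since any path in $\mathrm{sp}(\mf C,\Gamma)$ is a path in $\mf C$. Everything comes down to the converse.

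The converse follows once I establish the key lemma: if $a, b \in \mathrm{sp}(\mf C,\Gamma)$ and an s-node $c$ lies on some path from $a$ to $b$ in $\mf C$, then $c \in \mathrm{sp}(\mf C,\Gamma)$. Granted this, every s-node on a $\mf C$-path from $a$ to $b$ belongs to the strong past, and the intervening o-nodes are then automatically present by bipartiteness, giving $a \transition{\mathrm{sp}(\mf C,\Gamma)} b$.

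To prove the lemma I would argue by contradiction with a three-way splicing of paths. Assume $c \notin \mathrm{sp}(\mf C,\Gamma)$, so there is a maximal path $Q$ starting at $c$ which avoids $\Gamma$. Splicing the prefix $a \to \cdots \to c$ onto $Q$ yields a maximal path from $a$, and since $a \in \mathrm{sp}(\mf C,\Gamma)$ this path must visit some $s \in \Gamma$. The visit cannot lie inside $Q$, so $s$ lies on the prefix, giving $s \transition{\mf C} c$; the degenerate case $s = c$ already gives $c \in \Gamma \subseteq \mathrm{sp}(\mf C,\Gamma)$, so I may assume $s$ strictly precedes $c$. Symmetrically, extend the suffix $c \to \cdots \to b$ into a maximal continuation past $b$; since $b \in \mathrm{sp}(\mf C,\Gamma)$, this continuation meets $\Gamma$ at some $s'$, with either $s' = b$ or $s'$ strictly after $b$. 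Concatenating, $s \transition{\mf C} c \transition{\mf C} b$ and $b \transition{\mf C}^{*} s'$, so $s \transition{\mf C} s'$ with $s \neq s'$, contradicting the fact that $\Gamma$ is an antichain of mutually uncomparable s-nodes.

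The main obstacle is purely bookkeeping the corner cases — $c = a$ and $c = b$ are trivial; $s = c$ or $b \in \Gamma$ collapse the forward or backward extension to the trivial one-node path — but in each degenerate situation $c$ already lies in $\mathrm{sp}(\mf C,\Gamma)$ or the antichain contradiction still applies. The only genuine idea is the splicing argument together with the antichain property of $\Gamma$.
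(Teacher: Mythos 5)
Your proof is correct and follows essentially the same route as the paper's: reduce the full-subgraph condition to showing that every node lying on a path between two strong-past nodes is itself in the strong past, by completing paths out of the intermediate node into paths out of $a$. You are in fact more careful than the paper, whose one-line argument tacitly assumes the completed path meets $\Gamma$ at or after the intermediate node; your backward extension past $b$ combined with the antichain property of $\Gamma$ (plus acyclicity to get $s \neq s'$) is exactly what is needed to exclude the case where the crossing occurs strictly before it.
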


\begin{proof}
Let $a, b$ be two s-nodes in the strong past of $\Gamma$ and suppose that $a \transition {\mf C} b$. Let $n$ be a node in the path from $a$ et $b$. Any path going out from $n$ can be completed into a path going out from $a$. But since $a$ is in the strong past of $\Gamma$, this path intersects $\Gamma$. As a consequence, $n$ is also in the strong past of $\Gamma$.
\end{proof}

\begin{proposition} \label{prop:sp_is_enough}
If $\mf C \vdash \Gamma \verifies p$, then $\mathrm{sp}(\mf C, \Gamma) \vdash \Gamma \verifies p$.
\end{proposition}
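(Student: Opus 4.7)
The plan is to induct on the structure of the derivation $\pi$ witnessing $\mf C \vdash \Gamma \verifies p$ and argue that $\pi$, with no modification, is already a valid derivation in $\mathrm{sp}(\mf C, \Gamma)$. For this I need to check two things at every node of $\pi$: the slice mentioned there is entirely contained in the strong past of $\Gamma$, and every o-node referenced by a side condition of the form $\Gamma' = \Mes_q(\Delta) \in \mf C$ or $\Gamma' = U(\Delta) \in \mf C$ still belongs to $\mathrm{sp}(\mf C, \Gamma)$.

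The first step is a small geometric lemma: if every output s-node of an o-node $U \in \mf C$ lies in the strong past of $\Gamma$, then every input s-node of $U$ does as well, and hence $U$ itself belongs to the full subgraph $\mathrm{sp}(\mf C, \Gamma)$. The argument hinges on the fact that each s-node has at most one outgoing arrow, so any forward path from an input $n$ of $U$ must begin with $n \to U \to o_i$ for some output $o_i$; by hypothesis every continuation from $o_i$ crosses $\Gamma$, hence so does every forward path from $n$.

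The induction then proceeds by reading each rule of figure~\ref{fig:last_logic} backwards, maintaining the invariant that the slice of every statement appearing in $\pi$ is a subset of $\mathrm{sp}(\mf C, \Gamma)$. The root $\Gamma$ satisfies the invariant trivially. The rules \emph{Top} (no premises), \emph{Ord}, \emph{Meet} and \emph{Swap} (same underlying support) and \emph{Tens} (sub-slice) are immediate; the only interesting cases are \emph{Sas} and \emph{Uni}, where the conclusion slice $\Gamma' :: \Delta$ arises from the premise slice $\Gamma'' :: \Delta$ via a single o-node whose inputs are $\Gamma''$ and whose outputs are $\Gamma'$. Here the geometric lemma delivers exactly what is needed: $\Gamma'' \subseteq \mathrm{sp}(\mf C, \Gamma)$, and the o-node itself lies in $\mathrm{sp}(\mf C, \Gamma)$ since $\mathrm{sp}(\mf C, \Gamma)$ is a full subgraph (proposition~\ref{prop:sp_is_full}). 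Assembling these observations, the same proof tree $\pi$ witnesses $\mathrm{sp}(\mf C, \Gamma) \vdash \Gamma \verifies p$.

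I expect the main subtlety to lie in the preliminary lemma, specifically the use of the asymmetry that s-nodes have at most one outgoing arrow while o-nodes may branch; this is exactly what makes the backward propagation ``all outputs in the strong past implies all inputs in the strong past'' go through. Once that is in hand, the induction amounts to the observation that none of the logical rules can introduce into $\pi$ a slice outside the strong past of the original target $\Gamma$.
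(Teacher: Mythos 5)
Your proof is correct and takes essentially the same route as the paper's: the paper's two-line argument is precisely that in every rule of figure~\ref{fig:last_logic} the slices of the premisses lie in the strong past of the slice of the conclusion, so any derivation of $\Gamma \verifies p$ in $\mf C$ survives restriction to $\mathrm{sp}(\mf C, \Gamma)$. Your preliminary geometric lemma (all outputs of an o-node in the strong past implies all inputs are, using that s-nodes have at most one outgoing arrow) is just the explicit justification, for the \emph{Sas} and \emph{Uni} cases, of the observation the paper leaves implicit.
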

\begin{proof}
This follows from the fact that in all the rules in figure~\ref{fig:last_logic}, the slices of the premisses are in the strong past of the slice of the conclusion. As a consequence, any proof of $\Gamma \verifies p$ in $\mf C$ is also valid in $\mathrm{sp}(\mf C, \Gamma)$.
\end{proof}

As a consequence of these results, any verification statement about a slice $\Gamma$ in $\mf C$ can proven by only considering the strong past of $\Gamma$ in $\mf C$:

\begin{theorem}[Strong Causality] \label{th:strong_causality}
For all $\Gamma \in \Slice(\mf C)$ and $p \in L_{d(\Gamma)}$,
$$ \mf C \vdash \Gamma \verifies p \iff \mathrm{sp}(\mf C, \Gamma) \vdash \Gamma \verifies p $$
\end{theorem}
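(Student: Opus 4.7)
The plan is to observe that this theorem is essentially an immediate combination of the two preceding propositions, Proposition~\ref{prop:sp_is_full} and Proposition~\ref{prop:sp_is_enough}, together with the Monotony proposition (Proposition~\ref{prop:monotony}). So there is almost nothing to do beyond assembling these ingredients cleanly, and no step is really a serious obstacle; the work has already been done in the earlier lemmas.

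For the forward implication $\mf C \vdash \Gamma \verifies p \implies \mathrm{sp}(\mf C, \Gamma) \vdash \Gamma \verifies p$, I would simply invoke Proposition~\ref{prop:sp_is_enough}, which states exactly this. Recall that its justification is that every rule in figure~\ref{fig:last_logic} only uses premises about slices lying in the strong past of the slice in the conclusion, so any proof tree witnessing $\Gamma \verifies p$ in $\mf C$ is already a proof tree inside $\mathrm{sp}(\mf C, \Gamma)$.

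For the reverse implication $\mathrm{sp}(\mf C, \Gamma) \vdash \Gamma \verifies p \implies \mf C \vdash \Gamma \verifies p$, I would combine Proposition~\ref{prop:sp_is_full}, which gives $\mathrm{sp}(\mf C, \Gamma) \contains \mf C$, with Proposition~\ref{prop:monotony} (Monotony), which says that provability of a verification statement is preserved when passing from a circuit to a full subgraph (in the appropriate direction, namely from the containing one to the contained one along the $\contains$ relation). Applied to $\mf C' = \mf C$ and the containing graph $\mathrm{sp}(\mf C, \Gamma)$, this yields $\mf C \vdash \Gamma \verifies p$.

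Putting these two implications together gives the biconditional. The only subtle point worth stating explicitly in the write-up is to check the direction of $\contains$ in invoking Monotony: $\mathrm{sp}(\mf C, \Gamma)$ is the \emph{smaller} graph but is a full subgraph of $\mf C$ in the sense of the definition, so that $\Slice(\mathrm{sp}(\mf C, \Gamma)) \subseteq \Slice(\mf C)$ and any proof in the strong past lifts verbatim to~$\mf C$. No computation, no case analysis, and no new ideas are required.
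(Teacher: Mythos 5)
Your proof is correct and is exactly the paper's argument: the paper also derives the theorem as a direct consequence of Propositions~\ref{prop:monotony}, \ref{prop:sp_is_full} and~\ref{prop:sp_is_enough}. (One terminological slip: in the reverse direction you momentarily call $\mathrm{sp}(\mf C, \Gamma)$ the ``containing'' graph, but as you note in your final paragraph it is the contained one, i.e.\ $\mathrm{sp}(\mf C, \Gamma) \contains \mf C$, and Monotony is applied in the correct direction.)
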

\begin{proof}
This is a direct consequence of propositions~\ref{prop:monotony}, \ref{prop:sp_is_full} and~\ref{prop:sp_is_enough}.
\end{proof}


\subsection{The Meaning of Verification} \label{sub:meaning}

Initially, the verification statement $\mf C \vdash \Gamma \verifies p$ was defined as the statement that appending a measurement $\blank = \Mes_{p^\bot}(\Gamma)$ to $\mf C|_\Gamma$ would lead to an impossible circuit. However, in the definition of our logic, it appeared that in addition to the definition of verification statements, the only place where $\Imp$ was present was in the \emph{Mes} rule, as the consequence of the verification of $\bot$ at some slice of a circuit.

However, in our logic as definied in figure~\ref{fig:last_logic}, there is need any longer to references to $\Imp$, and $\verifies$ can be considered as an atomic statement rather some syntactical sugar as it was previously the case. Indeed, the next results show that our logic correctly captures the intended meaning of our verification statement as we will prove~that
$$ \mf C \vdash \Gamma \verifies p \iff \mathrm{sp}(\mf C, \Gamma) \cup \{\Gamma' = \Mes_{p^\bot}(\Gamma)\} \vdash \Gamma' \verifies \bot $$

\begin{proposition} \label{prop:compatible_slices}
Suppose now that $\Gamma' = \Mes_p(\Gamma)$ in a circuit~$\mf C$, and that~$A$ is a slice of~$\mf C$ compatible with $\Gamma'$, by which we mean that there exists a slice $\Xi \in \Slice(\mf C)$ such that
$$ \Gamma'_{\{\}} \cup A_{\{\}} \subseteq \Xi_{\{\}}$$
One can then define $\Delta$ and $B$ such that both~$\Gamma' :: \Delta$ and~$A :: B$ are slices of~$\mf C$, and that 
$$ (\Gamma' :: \Delta)_{\{\}} = (A :: B)_{\{\}} = \Gamma'_{\{\}} \cup A_{\{\}} $$
If $\sigma$ denotes the unitary operator obtained from successive applications of the \emph{Swap} rule for going from $A :: B$ to $\Gamma' :: \Delta$, then for all $q \in L_{d(A)}$:
$$ \mf C \vdash A \verifies q \implies k_{\mf C}(\Gamma :: \Delta) \sas (p \otimes \! \top) \leq \sigma(q \otimes \! \top) $$
\end{proposition}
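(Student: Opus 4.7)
The plan is to produce two verification statements at the common slice $\Gamma' :: \Delta$ and then compare them via the epistemic state. The first verification comes directly from the hypothesis: applying \emph{Tens} to $\mf C \vdash A \verifies q$ yields $\mf C \vdash A :: B \verifies q \otimes \top$, and since $A :: B$ and $\Gamma' :: \Delta$ share the support $\Gamma'_{\{\}} \cup A_{\{\}}$, iterated applications of the \emph{Swap} rule (whose block-swap form generates every permutation) implement $\sigma$ to give
$$ \mf C \vdash \Gamma' :: \Delta \verifies \sigma(q \otimes \top). $$
By Theorem~\ref{th:wbr}, this shows $k_{\mf C}(\Gamma' :: \Delta) \leq \sigma(q \otimes \top)$.

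The second verification pushes the pre-measurement epistemic state forward through the measurement. By Theorem~\ref{th:wbr}, $\mf C \vdash \Gamma :: \Delta \verifies k_{\mf C}(\Gamma :: \Delta)$, and since $\Gamma' = \Mes_p(\Gamma) \in \mf C$ the \emph{Sas} rule gives
$$ \mf C \vdash \Gamma' :: \Delta \verifies k_{\mf C}(\Gamma :: \Delta) \sas (p \otimes \top). $$
Minimality then yields $k_{\mf C}(\Gamma' :: \Delta) \leq k_{\mf C}(\Gamma :: \Delta) \sas (p \otimes \top)$.

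The desired bound $k_{\mf C}(\Gamma :: \Delta) \sas (p \otimes \top) \leq \sigma(q \otimes \top)$ will follow once the reverse inequality $k_{\mf C}(\Gamma :: \Delta) \sas (p \otimes \top) \leq k_{\mf C}(\Gamma' :: \Delta)$ is established, as it then gives $k_{\mf C}(\Gamma :: \Delta) \sas (p \otimes \top) = k_{\mf C}(\Gamma' :: \Delta) \leq \sigma(q \otimes \top)$. This reverse inequality is the main obstacle. It amounts to a \emph{backward Sas} principle: if $r \leq p \otimes \top$ and $\mf C \vdash \Gamma' :: \Delta \verifies r$, then $\mf C \vdash \Gamma :: \Delta \verifies r \vee (p \otimes \top)^\bot$. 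No rule in figure~\ref{fig:last_logic} directly provides such an inversion, so I would argue from the definition of $\verifies$ via $\Imp$: the impossibility of $\mf C|_{\Gamma' :: \Delta} \cup \{\blank = \Mes_{r^\bot}(\Gamma' :: \Delta)\}$ should entail the impossibility of $\mf C|_{\Gamma :: \Delta} \cup \{\blank = \Mes_{r^\bot \wedge (p \otimes \top)}(\Gamma :: \Delta)\}$ by reinserting $\Gamma' = \Mes_p(\Gamma)$ and observing that composing the two projective measurements (on $\Gamma$ with outcome $p$, then on $\Gamma' :: \Delta$ with outcome $r^\bot$) produces the same effect on possibility as the single measurement with outcome $r^\bot \wedge (p \otimes \top)$.
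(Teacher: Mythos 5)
Your proposal reduces the proposition to the single inequality $k_{\mf C}(\Gamma :: \Delta) \sas (p \otimes \! \top) \leq k_{\mf C}(\Gamma' :: \Delta)$, and you correctly flag this as the main obstacle --- but the resolution you sketch does not close it, and this is where the genuine gap lies. That inequality (together with its easy converse, which you derive correctly) is exactly the content of the proposition that \emph{follows} this one in the paper, namely $k_{\mf C}(\Gamma' :: \Delta) = k_{\mf C}(\Gamma :: \Delta) \sas (p \otimes \! \top)$, and the paper proves that equality \emph{from} Proposition~\ref{prop:compatible_slices} (by taking $A = \Gamma' :: \Delta$, $B$ empty, $q = k_{\mf C}(\Gamma' :: \Delta)$). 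So your reduction is circular relative to the logical development: the statement you need is a special case of the statement you are trying to prove. The inequality is a completeness-type claim --- that \emph{no} derivation of a verification statement at $\Gamma' :: \Delta$ can reach below $k_{\mf C}(\Gamma :: \Delta) \sas (p \otimes \! \top)$ --- and such a claim cannot be obtained by exhibiting particular derivations; it requires quantifying over all proof trees.

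That is precisely what the paper's proof does: it proceeds by induction on the proof tree establishing $A \verifies q$, treating each rule of figure~\ref{fig:last_logic} as a possible root. The only delicate cases are \emph{Sas} and \emph{Uni}, where one distinguishes whether the measured sub-slice $U'$ meets $\Gamma'$ (forcing $U = \Gamma$, $U' = \Gamma'$, $b = p$, in which case the bound is immediate) or is disjoint from it (in which case one invokes the induction hypothesis on $U :: V$ and a commutation identity $\bigl(q \sas (b \otimes \! \top)\bigr) \sas (\top \otimes p \otimes \! \top) = \bigl(q \sas (\top \otimes p \otimes \! \top)\bigr) \sas (b \otimes \! \top)$ for Sasaki projections onto compatible paddings). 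Your fallback --- arguing from the $\Imp$-based definition that composing the $p$-measurement with a subsequent $r^\bot$-measurement has the same effect on possibility as a single measurement of $r^\bot \wedge (p \otimes \! \top)$ --- is a semantic appeal that is not licensed by any rule of the final logic (and is not even correct as stated unless $r^\bot$ and $p \otimes \! \top$ are compatible); moreover, the bridge back to the $\Imp$-based reading (Theorem~\ref{th:verification_is_atomic}) is itself proved using this proposition, so that route is also circular. The missing idea is the structural induction on derivations.
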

\begin{proof} We prove this by induction on the proof tree leading to $A \verifies q$.
The proof for the \emph{Top}, \emph{Ord}, \emph{Meet}, \emph{Tens} and \emph{Swap} rules is direct. Suppose now that the root rule is an instance of the \emph{Sas} rule, of the~form
\begin{prooftree}
\AxiomC{$U::V \verifies a$}
\AxiomC{$U' = \Mes_b(U)$}
\RightLabel{Sas}
\BinaryInfC{$U' :: V \verifies a \sas (b \otimes \! \top_{\!V})$}
\end{prooftree} 
with $A = U' :: V$.

Suppose first that $U'_{\{\}} \cap \Gamma'_{\{\}} = \emptyset$. This implies that $U::V$ is also compatible with $\Gamma'$. Let us define $\Delta_1$ and $\Delta_2$ such that
$$ V :: B = \Delta_1 :: \Gamma' :: \Delta_2 $$
By induction hypothesis, one has
$$ k_{\mf C}(U :: \Delta_1 :: \Gamma :: \Delta_2) \sas (\top_{\!U :: \Delta_1} \otimes p \otimes \! \top_{\!\Delta_2}) \leq a \otimes \! \top_{\!B} $$
But
$$ k_{\mf C}(U' :: \Delta_1 :: \Gamma :: \Delta_2) \leq k_{\mf C}(U :: \Delta_1 :: \Gamma :: \Delta_2) \sas (b \otimes \! \top_{\!V :: B}) $$
and for all $q$,
$$ \bigl(q \sas (b \otimes \! \top_{\!V :: B})\bigr) \sas (\top_{\!U :: \Delta_1} \otimes p \otimes \! \top_{\!\Delta_2}) = \bigl(q \sas (\top_{\!U :: \Delta_1} \otimes p \otimes \! \top_{\!\Delta_2})\bigr) \sas (b \otimes \! \top_{\!V :: B}) $$
so that
\begin{multline*}
k_{\mf C}(U' :: \Delta_1 :: \Gamma :: \Delta_2) \sas (\top_{\!U :: \Delta_1} \otimes p \otimes \! \top_{\!\Delta_2}) \\
\leq \bigl(k_{\mf C}(U :: \Delta_1 :: \Gamma :: \Delta_2) \sas (b \otimes \! \top_{\!V :: B})\bigr) \sas (\top_{\!U :: \Delta_1} \otimes p \otimes \! \top_{\!\Delta_2}) \\
\leq \bigl(k_{\mf C}(U :: \Delta_1 :: \Gamma :: \Delta_2) \sas (\top_{\!U :: \Delta_1} \otimes p \otimes \! \top_{\!\Delta_2})\bigr) \sas (b \otimes \! \top_{\!V :: B}) \\
\leq (a \otimes \! \top_{\!B}) \sas (b \otimes \! \top_{\!V :: B}) \leq \bigl(a \sas (b \otimes \! \top_{\!V})\bigr) \otimes \! \top_{\! B}
\end{multline*}
which is the expected result.

Otherwise, $U'_{\{\}} \cap \Gamma'_{\{\}} \neq \emptyset$ and the only possibility is to have the application of the \emph{Sas} rule with $U = \Gamma$, $U' = \Gamma'$, $b = p$ and $q = k_{\mf C}(\Gamma :: V) \sas (p \otimes \! \top_{\!V})$, in which case the result follows directly.

The treatment of the \emph{Uni} rule is similar to that of the \emph{Sas} rule.
\end{proof}

\begin{proposition}
If $\Gamma :: \Delta \in \Slice(\mf C)$ and $\Gamma' = \Mes_p(\Gamma) \in \mf C$,~then
$$ k_{\mf C}(\Gamma' :: \Delta) = k_{\mf C}(\Gamma :: \Delta) \sas (p \otimes \! \top) $$
\end{proposition}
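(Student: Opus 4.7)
The plan is to prove the equality by establishing the two inequalities separately. One direction follows immediately from the \emph{Sas} rule, and the reverse is a direct specialization of the preceding Proposition~\ref{prop:compatible_slices}.

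For the upper bound $k_{\mf C}(\Gamma' :: \Delta) \leq k_{\mf C}(\Gamma :: \Delta) \sas (p \otimes \! \top)$, I would start from the trivial fact $\mf C \vdash \Gamma :: \Delta \verifies k_{\mf C}(\Gamma :: \Delta)$ (which holds by the definition of the epistemic state provided by Theorem~\ref{th:wbr}), then apply the \emph{Sas} rule using $\Gamma' = \Mes_p(\Gamma) \in \mf C$ to deduce $\mf C \vdash \Gamma' :: \Delta \verifies k_{\mf C}(\Gamma :: \Delta) \sas (p \otimes \! \top)$. The minimality of $k_{\mf C}(\Gamma' :: \Delta)$ within $S_{\mf C, \Gamma' :: \Delta}$ then delivers the inequality.

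For the reverse inequality, I would instantiate Proposition~\ref{prop:compatible_slices} with the choice $A := \Gamma' :: \Delta$. The compatibility requirement collapses trivially: because $\Gamma'_{\{\}} \subseteq A_{\{\}}$, one may take $\Xi := \Gamma' :: \Delta$, in which case the ``$\Delta$'' of the proposition coincides with our $\Delta$, the complementary slice $B$ is empty, and the swap $\sigma$ reduces to the identity. Choosing $q := k_{\mf C}(\Gamma' :: \Delta)$, which satisfies $\mf C \vdash A \verifies q$ by definition, the proposition's conclusion then specializes to exactly $k_{\mf C}(\Gamma :: \Delta) \sas (p \otimes \! \top) \leq k_{\mf C}(\Gamma' :: \Delta)$.

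The main conceptual obstacle is recognizing that Proposition~\ref{prop:compatible_slices}, despite being phrased as if $A$ were a genuinely ``external'' compatible slice, degenerates cleanly when $A$ is taken to be $\Gamma' :: \Delta$ itself. Once this reduction is spotted, both halves of the argument are one-line applications of previously established rules.
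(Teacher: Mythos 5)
Your proposal is correct and follows essentially the same route as the paper: the upper bound via a single application of the \emph{Sas} rule to $\Gamma :: \Delta \verifies k_{\mf C}(\Gamma :: \Delta)$ together with minimality of the epistemic state, and the lower bound by specializing Proposition~\ref{prop:compatible_slices} to $A = \Gamma' :: \Delta$ with $B$ empty and $q = k_{\mf C}(\Gamma' :: \Delta)$. Your added remark that the swap $\sigma$ degenerates to the identity in this case is a useful explicit detail the paper leaves implicit.
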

\begin{proof}
First, let us remark that
\begin{prooftree}
\AxiomC{$\Gamma :: \Delta \verifies k_{\mf C}(\Gamma :: \Delta)$}
\AxiomC{$\Gamma' = \Mes_p(\Gamma)$}
\BinaryInfC{$\Gamma' :: \Delta \verifies k_{\mf C}(\Gamma :: \Delta) \sas (p \otimes \! \top)$}
\end{prooftree}
so that
$$k_{\mf C}(\Gamma' :: \Delta) \leq k_{\mf C}(\Gamma :: \Delta) \sas (p \otimes \! \top).$$
Conversely, considering proposition~\ref{prop:compatible_slices} with $A = \Gamma' :: \Delta$ (and hence $B$ is the empty slice) and $q = k_{\mf C}(\Gamma' :: \Delta)$, we have
$$ k_{\mf C}(\Gamma :: \Delta) \sas (p \otimes \! \top) \leq k_{\mf C}(\Gamma' :: \Delta) $$
\end{proof}

The previous result now allows us to recover and refine the initial meaning of our verification statements:

\begin{theorem} \label{th:verification_is_atomic}
For all $\Gamma \in \Slice(\mf C)$ and $p \in L_{d(\Gamma)}$,
$$ \mf C \vdash \Gamma \verifies p \iff \mathrm{sp}(\mf C, \Gamma) \cup \{\Gamma' = \Mes_{p^\bot}(\Gamma)\} \vdash \Gamma' \verifies \bot $$
\end{theorem}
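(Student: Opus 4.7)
The plan is to split the equivalence into two directions and to ride both on the proposition immediately preceding the theorem, which says that for $\Gamma' = \Mes_q(\Gamma)$ one has $k_{\mf C}(\Gamma' :: \Delta) = k_{\mf C}(\Gamma :: \Delta) \sas (q \otimes \!\top)$, combined with the \emph{Sas} rule and Theorem~\ref{th:strong_causality}. Set $\mf C^\star = \mathrm{sp}(\mf C, \Gamma) \cup \{\Gamma' = \Mes_{p^\bot}(\Gamma)\}$. Since the only nodes added to $\mathrm{sp}(\mf C, \Gamma)$ to obtain $\mf C^\star$ lie strictly after $\Gamma$, no new order relations appear among pre-existing nodes, so $\mathrm{sp}(\mf C, \Gamma) \contains \mf C^\star$ and moreover $\mathrm{sp}(\mf C^\star, \Gamma) = \mathrm{sp}(\mf C, \Gamma)$.

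For the direction $(\Rightarrow)$, assume $\mf C \vdash \Gamma \verifies p$. Theorem~\ref{th:strong_causality} first gives $\mathrm{sp}(\mf C, \Gamma) \vdash \Gamma \verifies p$, and Proposition~\ref{prop:monotony} transports this to the larger circuit $\mf C^\star$. Applying the \emph{Sas} rule at $\Gamma' = \Mes_{p^\bot}(\Gamma)$ then yields $\mf C^\star \vdash \Gamma' \verifies p \sas p^\bot$, and since
$$ p \sas p^\bot = p^\bot \wedge \bigl(p \vee (p^\bot)^\bot\bigr) = p^\bot \wedge p = \bot, $$
we conclude $\mf C^\star \vdash \Gamma' \verifies \bot$.

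For the direction $(\Leftarrow)$, assume $\mf C^\star \vdash \Gamma' \verifies \bot$, so that $k_{\mf C^\star}(\Gamma') = \bot$. The preceding proposition gives $k_{\mf C^\star}(\Gamma') = k_{\mf C^\star}(\Gamma) \sas p^\bot$, and writing $k = k_{\mf C^\star}(\Gamma)$ this becomes $p^\bot \wedge (k \vee p) = \bot$. Orthomodularity then yields
$$ k \vee p = p \vee \bigl(p^\bot \wedge (k \vee p)\bigr) = p \vee \bot = p, $$
so $k \leq p$, i.e.\ $\mf C^\star \vdash \Gamma \verifies p$. Strong causality together with $\mathrm{sp}(\mf C^\star, \Gamma) = \mathrm{sp}(\mf C, \Gamma)$ transports this down to $\mathrm{sp}(\mf C, \Gamma) \vdash \Gamma \verifies p$, and a final application of Theorem~\ref{th:strong_causality} inside $\mf C$ gives $\mf C \vdash \Gamma \verifies p$.

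The substantive obstacle I expect is the bookkeeping about strong pasts and full subgraphs under the adjunction of a future measurement: one must check carefully that $\mathrm{sp}(\mf C, \Gamma) \contains \mf C^\star$ and that $\mathrm{sp}(\mf C^\star, \Gamma) = \mathrm{sp}(\mf C, \Gamma)$, so that the epistemic state $k_{\mf C^\star}(\Gamma)$ really coincides with what the original circuit can prove at $\Gamma$. Once those identifications are secured, the rest reduces to the two clean algebraic facts $p \sas p^\bot = \bot$ and its orthomodular converse $k \sas p^\bot = \bot \Rightarrow k \leq p$.
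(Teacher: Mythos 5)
Your proposal is correct and follows essentially the same route as the paper: the forward direction via the \emph{Sas} rule and $p \sas p^\bot = \bot$, and the converse via the identity $k_{\mf C'}(\Gamma') = k_{\mf C'}(\Gamma) \sas p^\bot$ from the preceding proposition together with the orthomodular fact that $a \sas b = \bot$ implies $a \leq b^\bot$. The only differences are that you make explicit the bookkeeping (monotony, strong causality, $\mathrm{sp}(\mf C^\star,\Gamma) = \mathrm{sp}(\mf C,\Gamma)$) that the paper leaves implicit, and you supply the short orthomodularity computation that the paper merely cites.
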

\begin{proof}
Obviously, if $\mf C \vdash \Gamma \verifies p$, then by putting
$$ \mf C' = \mathrm{sp}(\mf C, \Gamma) \cup \{\Gamma' = \Mes_{p^\bot}(\Gamma)\}, $$
we have
\begin{prooftree}
\AxiomC{$\mf C' \vdash \Gamma \verifies p$}
\AxiomC{$\mf C' \vdash \Gamma' = \Mes_{p^\bot}(\Gamma)$}
\RightLabel{Sas}
\BinaryInfC{$\mf C' \vdash \Gamma' \verifies \underbrace{p \sas p^\bot}_{\, \bot}$}
\end{prooftree}
Conversely, if $\mf C' \vdash \Gamma' \verifies \bot$, then
$$ \bot = k_{\mf C'}(\Gamma') = k_{\mf C'}(\Gamma) \sas p^\bot = k_{\mathrm{sp}(\mf C, \Gamma)}(\Gamma) \sas p^\bot$$
But in orthomodular lattice, it is true that
$$ a \sas b \iff a \leq b^\bot $$
so that $k_{\mathrm{sp}(\mf C, \Gamma)}(\Gamma) \leq p$ and hence $\mathrm{sp}(\mf C, \Gamma) \vdash \Gamma \verifies p$ and, finally,
$ \mf C \vdash \Gamma \verifies p$.
\end{proof}

\subsection{Knowledge and Entanglement}

Given a slice $\Gamma :: \Delta$ of a circuit $\mf C$, for all $p$ and $q$ in $L_{d(\Gamma)}$, if $\Gamma :: \Delta$ verifies both $p \otimes \! \top_{\!\Delta}$ and $q \otimes \! \top_{\!\Delta}$, it also verifies their meet $(p \wedge q) \otimes \! \top_{\!\Delta}$. Moreover $\Gamma :: \Delta \verifies \top_{\!\Gamma} \otimes \! \top_{\!\Delta}$. 

Let $\Gamma \verifies p \atfter \Delta$ denote $\Gamma :: \Delta \verifies p \otimes \top_{\!\Delta}$ which reads ``$\Gamma$ verifies $p$ at $\Delta$''. We thus have:
\begin{gather*}
\Gamma \verifies \top \atfter \Delta \\
\bigl(\Gamma \verifies p \atfter \Delta\ \hbox{and}\ p \leq q \bigr) \implies \Gamma \verifies q \atfter \Delta \\ 
\bigl(\Gamma \verifies p \atfter \Delta \ \hbox{and}\ \Gamma \verifies q \atfter \Delta\bigr) \implies \Gamma \verifies p \wedge q \atfter \Delta
\end{gather*}
This suggest the following definition:
\begin{definition}
For all $\Gamma :: \Delta \in \Slice(\mf C)$, 
$$ k_{\mf C}(\Gamma \mid \Delta) = \min \bigl\{p \in L_\Gamma \bigm| \Gamma \verifies p \atfter \Delta \bigr\} $$
\end{definition}

Obviously, because of the \emph{Tens} rule, one has $k_{\mf C}(\Gamma \mid \Delta) \leq k_{\mf C}(\Gamma)$. This result can actually be significantly strenghened as follows:
\begin{proposition}
If both $\Gamma :: \Delta$ and $\Gamma :: \Delta'$ are slices of $\mf C$ and if $\Delta$ is in the strong past of $\Delta'$, then for all $p \in L_{d(\Gamma)}$,
$$ \Gamma \verifies p \atfter \Delta \implies \Gamma \verifies p \atfter \Delta' $$
\end{proposition}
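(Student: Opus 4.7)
The plan is to proceed by induction on the number of o-nodes of $\mf C$ lying between $\Delta$ and $\Delta'$. Because $\Delta$ is in the strong past of $\Delta'$, every s-node of $\Delta$ must eventually be consumed by an outgoing o-node on its way to $\Delta'$, and the o-nodes in this intermediate region form a finite DAG that can be topologically sorted. This produces a sequence of slices $\Delta = \Delta_0, \Delta_1, \ldots, \Delta_n = \Delta'$ in which each $\Delta_{i+1}$ is obtained from $\Delta_i$ by applying a single o-node $o_i$ whose inputs all lie in $\Delta_i$. A short preliminary argument is needed to check that every $\Gamma :: \Delta_i$ is still a valid slice of~$\mf C$: any s-node strictly between $\Delta$ and $\Delta'$ that were comparable with an s-node of~$\Gamma$ would, by transitivity, force a comparability between an element of~$\Gamma$ and an element of either $\Delta$ or~$\Delta'$, contradicting that $\Gamma :: \Delta$ and $\Gamma :: \Delta'$ are slices.

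The induction invariant is $k_{\mf C}(\Gamma :: \Delta_i) \leq p \otimes \top_{\Delta_i}$. The base case $i = 0$ is the hypothesis $\Gamma \verifies p \atfter \Delta$ phrased in terms of the epistemic state, and the case $i = n$ is the desired conclusion. If $o_i$ is a unitary $U$ applied to some subslice of~$\Delta_i$, then after an appropriate Swap the \emph{Uni} rule yields $k_{\mf C}(\Gamma :: \Delta_{i+1}) = (\Id_{\Gamma} \otimes \bracc U \otimes \Id)(k_{\mf C}(\Gamma :: \Delta_i))$, and since $U(\top) = \top$, the inductive hypothesis transports immediately to the required bound $p \otimes \top_{\Delta_{i+1}}$.

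If instead $o_i$ is a measurement $\Mes_q$, then Swapping the measured factor to the front and invoking the epistemic-state version of the \emph{Sas} rule established in the proposition immediately preceding, one obtains $k_{\mf C}(\Gamma :: \Delta_{i+1}) = k_{\mf C}(\Gamma :: \Delta_i) \sas (\top_\Gamma \otimes q \otimes \top)$. The key observation is that both $p \otimes \top_{\Delta_i}$ and $\top_{\Gamma} \otimes q \otimes \top$ are product subspaces along the tensor factorization $\mb C^{d(\Gamma)} \otimes \mb C^{d(\Delta_i)}$, and are therefore compatible in $L_{d(\Gamma :: \Delta_i)}$. A short orthomodular-lattice computation shows that whenever $k \leq c$ and $c \comp b$, one has $k \sas b \leq c \wedge b \leq c$; applied here this gives $k_{\mf C}(\Gamma :: \Delta_{i+1}) \leq p \otimes q \otimes \top \leq p \otimes \top_{\Delta_{i+1}}$ by enlarging $q$ to~$\top$.

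The main obstacle is the bookkeeping around the \emph{Swap} rule in the two inductive cases (keeping track of exactly how the tensor structure of $\Gamma :: \Delta_i$ is rearranged when the transformed subslice is brought to the front) together with verifying that the intermediate slices stay in $\Slice(\mf C)$; the orthomodular-lattice calculation itself is then essentially one line.
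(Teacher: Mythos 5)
Your proposal is correct in substance, and it is worth noting that the paper offers no real proof here: it only says the result ``can be proved in a way similar to the proof of theorem~\ref{th:verification_is_atomic}''. Your argument is a genuine, worked-out version of that hint, but organized differently: where the paper's surrounding machinery (proposition~\ref{prop:compatible_slices} and theorem~\ref{th:verification_is_atomic}) proceeds by induction on proof trees and handles a \emph{single} appended measurement, you induct on a topological sort of the o-nodes in the region between $\Delta$ and $\Delta'$, carrying the invariant $k_{\mf C}(\Gamma :: \Delta_i) \leq p \otimes \top_{\Delta_i}$ through each elementary step. The two key ingredients --- that a measurement transforms the epistemic state by $\sas (\top_\Gamma \otimes q \otimes \top)$, and the orthomodular fact that $k \leq c$ and $c \comp b$ give $k \sas b \leq c \wedge b \leq c$ --- are exactly the mechanism the paper relies on elsewhere, and your computation of the latter is right. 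What your route buys is an explicit, checkable argument where the paper has a pointer; what it costs is the combinatorial bookkeeping you flag, and there is one spot where that bookkeeping needs slightly more than you state: an o-node in the region may have an input s-node that is \emph{not} in the future of $\Delta$ (e.g.\ the second wire of a CNot entering from elsewhere), so such nodes must be adjoined to the intermediate slices via \emph{Tens} before the o-node can be applied, and your incomparability-with-$\Gamma$ argument must be extended to them --- which it can be, using the fact that each s-node has at most one outgoing arrow, so any comparability between such an input and a node of $\Gamma$ factors through an output of the o-node and hence contradicts $\Gamma :: \Delta$ or $\Gamma :: \Delta'$ being a slice. With that addition the proof is complete at the level of rigor the paper itself operates at.
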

\begin{proof}
This can be proved in a way similar to the proof of theorem~\ref{th:verification_is_atomic}.
\end{proof}

This suggest that ``$\atfter$'' should be pronounced ``\emph{after (inclusive)}'' rather than just~``\emph{at}''. A direct consequence of this is the following:
\begin{theorem} \label{theo:at_reads_after}
If both $\Gamma :: \Delta$ and $\Gamma :: \Delta'$ are slices of $\mf C$ and if $\Delta$ is in the strong past of $\Delta'$, then
$$ k_{\mf C}(\Gamma \mid \Delta') \leq k_{\mf C}(\Gamma \mid \Delta) $$
\end{theorem}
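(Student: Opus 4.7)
My plan is to treat this as an essentially formal consequence of the preceding proposition, once well-definedness of the minimum in the definition of $k_{\mf C}(\Gamma \mid \Delta)$ is confirmed. I would therefore first verify that the minimum exists, then chase the inequality through the definitions.

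For well-definedness I would check that the set $S_\Delta = \{p \in L_{d(\Gamma)} \mid \Gamma \verifies p \atfter \Delta\}$ is nonempty, upward closed, and closed under binary meets. Non-emptiness comes from combining \emph{Top} with \emph{Tens} to get $\Gamma \verifies \top \atfter \Delta$; upward closure is just \emph{Ord} composed with \emph{Tens}; closure under meets comes from \emph{Meet} applied on the slice $\Gamma :: \Delta$, together with the identity $(p \otimes \!\top) \wedge (q \otimes \!\top) = (p \wedge q) \otimes \!\top$. Since $L_{d(\Gamma)}$ has finite height, these three properties imply that $S_\Delta$ has a unique minimum, namely $k_{\mf C}(\Gamma \mid \Delta)$; the same argument gives well-definedness of $k_{\mf C}(\Gamma \mid \Delta')$.

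The inequality is then a short chase. Setting $p = k_{\mf C}(\Gamma \mid \Delta)$, one has $\Gamma \verifies p \atfter \Delta$ by definition of the minimum. Since $\Delta$ lies in the strong past of $\Delta'$, the preceding proposition upgrades this to $\Gamma \verifies p \atfter \Delta'$, i.e.\ $p \in S_{\Delta'}$. Minimality of $k_{\mf C}(\Gamma \mid \Delta')$ in $S_{\Delta'}$ then gives $k_{\mf C}(\Gamma \mid \Delta') \leq p = k_{\mf C}(\Gamma \mid \Delta)$, as required.

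The only step in the theorem itself that deserves a moment of thought is the well-definedness of the minimum, and even that is routine given the rules in figure~\ref{fig:last_logic}. The actual technical weight sits in the preceding proposition; its proof I would expect to proceed by induction on the derivation of $\Gamma :: \Delta \verifies p \otimes \!\top$ in the style of theorem~\ref{th:verification_is_atomic}, showing at each rule that the conclusion can be rederived after the slice $\Delta$ is enlarged into its strong future $\Delta'$ (using \emph{Sas} to push the intermediate measurements of s-nodes lying between $\Delta$ and $\Delta'$ past the tensor padding on $\Delta$). For the present theorem, there is no genuine obstacle once that proposition is granted.
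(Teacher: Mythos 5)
Your proposal is correct and matches the paper's treatment: the paper offers no explicit proof, presenting the theorem as a ``direct consequence'' of the preceding proposition, and your argument --- check that the minimum defining $k_{\mf C}(\Gamma \mid \Delta)$ exists (nonempty by \emph{Top}/\emph{Tens}, upward closed by \emph{Ord}, meet-closed by \emph{Meet}, finite height), then push $k_{\mf C}(\Gamma \mid \Delta) \in S_\Delta$ into $S_{\Delta'}$ via that proposition and invoke minimality --- is exactly that direct consequence, spelled out. You also correctly locate the real technical weight in the preceding proposition, which the paper itself only sketches by analogy with theorem~\ref{th:verification_is_atomic}.
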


This shows that if the system $S_1$ at $\Gamma$ is entangled with the system $S_2$ at~$\Delta$, then if $S_1$ is left untouched, acting on $S_2$ (and, in particular, performing measurements on $S_2$) can only increasing one's knowledge about $S_1$.

\ 

To illustrate this, let us consider again the example, taken from \cite{Aharonov-Albert:IsUsualII}, of a particle which can be found in three boxes $A$, $B$ and $C$. after having been prepared in a state $|001\ket + |010\ket + |001\ket$ (where we indicate the modes in the different boxes). The next circuit illustrate the situation where the particle is found in box~$B$:


\hbox{} \hfill %
\begin{tikzpicture}[point/.style={circle,inner sep=0pt,outer sep = 0pt, minimum size=2pt,fill=black}]
	\matrix[row sep=4mm,column sep=5mm]{
		\node (a1) [point] {}; & \node[rectangle] (A) {$0$} ; & \node (a2) [point] {}; \\
		\node (b1) [point] {}; & \node[rectangle] (B) {$1$} ; & \node (b2) [point] {}; \\
		\node (c1) [point] {}; & \node[rectangle] (C) {$0$} ; & \node (c2) [point] {}; \\
	};
\node [above = -2pt of a1] {$A_1$} ;
\node [above = -2pt of b1] {$B_1$} ;
\node [above = -2pt of c1] {$C_1$} ;
\node [above = -2pt of a2] {$A_2$} ;
\node [above = -2pt of b2] {$B_2$} ;
\node [above = -2pt of c2] {$C_2$} ;
\draw (a1) -- (A) ; \draw [very thin] (A) -- (a2) ;
\draw (b1) -- (B) -- (b2) ;
\draw (c1) -- (C) ; \draw [very thin] (C) -- (c2) ;
\draw [thick] (A.north west) -- (A.south west) ;
\draw [thick] (A.north east) -- (A.south east) ;
\draw [thick] (B.north west) -- (B.south west) ;
\draw [thick] (B.north east) -- (B.south east) ;
\draw [thick] (C.north west) -- (C.south west) ;
\draw [thick] (C.north east) -- (C.south east) ;
\draw[decorate,decoration={brace}, thick] ($(c1) + (-9pt, -4pt)$) -- ($(a1) + (-9pt, 4pt)$)
  node[midway, left] (bracket) {$\bigl[\mathopen|A\ket + \mathopen|B\ket + \mathopen|C\ket \bigr]\ $} ;
\end{tikzpicture} %
\hfill \hbox{}

Let us first compute $k(A_1, B_1 \mid C_1)$. We have:
$$ [|100\ket+|010\ket+|001\ket] \leq P \otimes \top \iff P^\bot \otimes \top \leq [|100\ket+|010\ket+|001\ket]^\bot $$
so that
\begin{multline*}
[\varphi] \in P^\bot \iff \Bigl([\varphi]\otimes[0] \in [|100\ket+|010\ket+|001\ket]^\bot \cand \\
[\varphi]\otimes[1] \in [|100\ket+|010\ket+|001\ket]^\bot\Bigr)
\end{multline*}
Putting $|\varphi\ket = a \mathopen|00\ket + b \mathopen|01\ket + c \mathopen|10\ket + d \mathopen|11\ket $, this implies
$$ b + c = 0 \cand a = 0 $$
so that $P^\bot = [|01\ket - |10\ket] + [11]$, $ P = [00] + [|01\ket + |10\ket] $ and finally
$$ k_{\mf C}(A_1, B_1 \mid C_1) = [00] + [|01\ket + |10\ket]. $$

Later, we have
$$ k_{\mf C}(A_1, B_1, C_2) = [|100\ket+|010\ket+|001\ket] \sas (\top \! \otimes [0]) = [|100\ket+|010\ket] $$
so that
$$ k_{\mf C}(A_1, B_1 \mid C_2) = [|01\ket + |10\ket] $$
We thus have found that
$$ k_{\mf C}(A_1, B_1 \mid C_2) = [|01\ket + |10\ket] \leq [00] + [|01\ket + |10\ket] = k_{\mf C}(A_1, B_1 \mid C_1) $$
which illustrates the fact that the knowledge regarding $[A_1, B_1]$ increases when the opening of box $C$ teaches us that the particle is not there.

\section{Realism and Lorentz Invariance}

Let us now turn to the question whether it is possible to have a realistic and Lorentz-invariant interpretation of quantum mechanics. Following Einstein, Podolsky and Rosen \cite{Einstein35EPR}, and using subsequent amendments by Redhead \cite{Redhead:Realism},
\begin{quotation}
``If we can predict with certainty (or at any rate with probability one) the result of measuring a physical quantity at time $t$, then at the time $t$, there exists \emph{an element of reality} corresponding to this physical quantity and having a value equal to the predicted measurement result.'' 
\end{quotation} 
and we will consider the following definition of Lorentz invariance, borrowed from \cite{Vaidman93:EltsReality}:
\begin{quotation}
``If an element of reality corresponding to some Lorentz-invariant physical quantity exists and has a value within space-time region $R$ with respect to one space-like hypersurface containing $R$, then it exists and has the same value in $R$ with respect to any other hypersurface containing $R$.''
\end{quotation}
In \cite{Hardy92}, Hardy presents a gedanken experiment which, he argues, shows that it is not possible to have a realistic Lorentz-invariant quantum theory. In the same period, a similar argument was proposed by Clifton, Pagonis and Pitowsky \cite{CliftonPagonisPitowsky92} using three particles prepared in a GHZ-like state \cite{GHZ90}.

We will argue, on the contrary, that it is possible to have a realistic Lorentz-invariant interpretation of quantum mechanics, with elements of reality corresponding to verification statements, i.e.\ statements of the~form
$$ \Gamma \verifies p $$

In order to illustrate this, let us first describe Hardy's gedanken experiment in our formalism. The setup consists in two Mach-Zender-type interferometers, one for positrons and one for electrons. The key point is that the two interferometers have overlapping arms, so that if a positron and an electron both take these overlapping arms, they annihilate each other. The corresponding circuit is represented in figure~\ref{fig:hardy}, where the $A$ area represents the overlapping zone.

The action of the different beamsplitters is given by the following mappings:
\begin{align*}
|e^\pm\ket & \mapsto \frac 1 {\sqrt 2} \bigl(|v^\pm\ket + i |w^\pm\ket\bigr) \\
|v^\pm\ket & \mapsto \frac 1 {\sqrt 2} \bigl( i |c^\pm\ket + |d^\pm\ket \bigr) \\
|u^\pm\ket & \mapsto \frac 1 {\sqrt 2} \bigl( |c^\pm\ket + i |d^\pm\ket \bigr)
\end{align*}

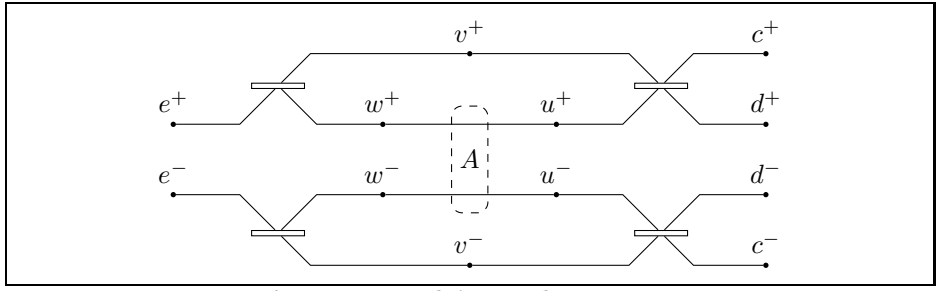
\begin{figure}
\begin{centering}
\begin{tikzpicture}[point/.style={circle,inner sep=0pt,outer sep = 0pt, minimum size=2pt,fill=black},
bsp/.style={rectangle, draw, minimum width=20pt, minimum height = 2pt, inner sep = 0pt}]
	\matrix[row sep=10pt,column sep=10mm]{
		& & & \node (vp) [point] {} ; & & & \node (cp) [point] {} ; \\
		& \node(bs1p) [bsp] {} ; & & & & \node(bs2p) [bsp] {} ; \\
		\node (ep) [point] {} ; & & \node (wp) [point] {} ; & \node (ap) {} ; & \node (up) [point] {} ; & & \node (dp) [point] {} ; \\
		\\
		\node (em) [point] {} ; & & \node (wm) [point] {} ; & \node (am) {} ; & \node (um) [point] {} ; & & \node (dm) [point] {} ; \\
		& \node (bs1m) [bsp] {} ; & & & & \node(bs2m) [bsp] {} ; \\
		& & & \node (vm) [point] {} ; & & & \node (cm) [point] {} ; \\
	};
	\draw
		let \p1 = (vp), \p2 = (bs1p), \n1 = {\y1 - \y2} in (bs1p) -- ($(bs1p) + (\n1, \n1)$) -- (vp) -- ($(bs2p) + (-\n1, \n1)$) -- (bs2p) -- ($(bs2p) + (\n1, \n1)$) -- (cp) ;
	\draw 
		let \p1 = (ep), \p2 = (bs1p), \n1 = {\y2 - \y1} in (ep) -- ($(bs1p) + ({-\n1}, {-\n1})$) -- (bs1p) -- ($(bs1p) + (\n1, {-\n1})$) -- (wp) -- (up) -- ($(bs2p)+ (-\n1, -\n1)$) -- (bs2p) -- ($(bs2p)+ (\n1, -\n1)$) -- (dp) ;
	\draw 
		let \p1 = (em), \p2 = (bs1m), \n1 = {\y1 - \y2} in (em) -- ($(bs1m) + (-\n1, \n1)$) -- (bs1m) -- ($(bs1m) + (\n1, \n1)$) -- (wm) -- (um) -- ($(bs2m)+ (-\n1, \n1)$) -- (bs2m) -- ($(bs2m)+ (\n1, \n1)$) -- (dm) ;
	\draw
		let \p1 = (vm), \p2 = (bs1m), \n1 = {\y2 - \y1} in (bs1m) -- ($(bs1m) + (\n1, -\n1)$) -- (vm) -- ($(bs2m) + (-\n1, -\n1)$) -- (bs2m) -- ($(bs2m) + (\n1, -\n1)$) -- (cm) ;
	\node[dashed, draw, fit=(ap) (am),rounded corners] (fit) {} ;
	\node at (fit) {$A$} ;
	\node (epl) [above = 0pt of ep] {$e^+$} ;
	\node (vpl) [above = 0pt of vp] {$v^+$} ;
	\node (wpl) [above = 0pt of wp] {$w^+$} ;
	\node (upl) [above = 0pt of up] {$u^+$} ;
	\node (vml) [above = 0pt of vm] {$v^-$} ;
	\node (wml) [above = 0pt of wm] {$w^-$} ;
	\node (uml) [above = 0pt of um] {$u^-$} ;
	\node (eml) [above = 0pt of em] {$e^-$} ;
	\node (cpl) [above = 0pt of cp] {$c^+$} ;
	\node (dpl) [above = 0pt of dp] {$d^+$} ;
	\node (cml) [above = 0pt of cm] {$c^-$} ;
	\node (dml) [above = 0pt of dm] {$d^-$} ;
\end{tikzpicture} \\
\end{centering}
\caption{Hardy's paradox circuit $\mf H$} \label{fig:hardy}
\end{figure}

Let's formalize the behaviour of this circuit considering particle modes. Past the first beamsplitters, one~has
$$ v^+,w^+ \verifies \bigl[|10\ket + i|01\ket\bigr] \qquad \hbox{and} \qquad v^-, w^- \verifies \bigl[|10\ket + i|01\ket\bigr], $$
the combination of which yielding
$$ v^+, w^+, w^-, v^- \verifies \bigl[|1001\ket + i |1010\ket + i |0101\ket - |0110\ket\bigr]. $$
In the \emph{annihilation} zone $A$, the term $|0110\ket$ -- corresponding to having both the particle and antiparticle take the intersecting arms and thus annihilating each other -- becomes $|0000\ket$, leading~to
$$ v^+, u^+,u^-,v^- \verifies \bigl[|1001\ket + i |1010\ket + i |0101\ket - |0000\ket\bigr] $$

Considering now the slice $(c^+, d^+, u^-, v^-)$ where the positron has past the second beamsplitter while the electron has not, we obtain:
$$ c^+, d^+, u^-, v^- \verifies\bigl[2 i|1001\ket - |1010\ket + i |0110\ket - \sqrt 2 |0000\ket \bigr] $$
Finally, one both particles have past their second beamsplitter, one~has
$$ c^+, d^+, d^-, c^- \verifies\bigl[ - 3 |1001\ket + i |1010\ket + i |0101\ket - |0110\ket - 2 |0000\ket \bigr] $$

Considering the epistemic state at $(c^+, d^+, u^-, v^-)$, namely
$$ \bigl[2 i|1001\ket - |1010\ket + i |0110\ket - \sqrt 2 |0000\ket \bigr], $$
if the positron is found at detector $d^+$, then the electron must have followed the $w^-/u^-$ path. This corresponds to the fact that the only term not of the form $|\blank \,0\, \blank \blank \ket$ in the previous state is $|0110\ket$. Similarly, if the electron is found at $d^-$, this would imply that the positron has taken the $w^+/u^+$ arm of the interferometer.

But now, using Lorentz invariance, considering a reference frame $F^-$ in which the electron is found at $d^-$ before the positron passes the second beamsplitter. In that frame, the positron has to be in path $w^+/u^+$. Similarly, in the reference frame $F^+$ where the positron is found at $d^+$ before the electron has passed the second beamsplitter, the electron has to be in path $ w^-/u^-$. Consider now a third reference frame $F^=$ containing events $v^\pm$ and $u^\pm$. In that frame, both particle would be in their $u$ arms, which is impossible because they would have annihilated each other already. As a consequence, we would predict that it is not possible to find both particles at the $d$ detectors.

But quantum mechanics predicts that it is indeed possible to find both particles at the $d$ detectors (it is the term $- |0110\ket$ in the epistemic state at $(c^+, d^+, d^-, c^-)$) and actual experiments have confirmed these predictions \cite{LundeenSteinberg09:Hardy,Yokota09:Hardy}.

Let us now study these deductions in our formalism. First, it is clear that one cannot find both particles in arms $u^+$ and $u^-$, as
$$ u^+, u^- \verifies [11]^\bot \ \atfter \ v^+, v^- $$
Consider now the situation where the positron has actually been measured at $d^+$, which corresponds to the circuit $\mf H \cup \{ \star = \Mes_{[1]}(d^+)\}$ (it is not necessary to cut the circuit after $d^+$ since this s-node has not outgoing arrow). From the statement
$$ c^+, d^+, u^-, v^- \verifies\bigl[2 i|1001\ket - |1010\ket + i |0110\ket - \sqrt 2 |0000\ket \bigr], $$
we deduce
$$ \mf H \cup \{ \star = \Mes_{[1]}(d^+)\} \vdash c^+, \ \star \ , u^-, v^- \verifies [0110] $$
and, in particular,
$$ \mf H \cup \{ \star = \Mes_{[1]}(d^+)\} \vdash u^-\verifies [1] \ \atfter \ c^+, v^-, \star $$
so that it is not possible to find the electron in the $v^-$ arm, from which we deduce that it has to take the $u^-$ arm. Similarly,
$$ \mf H \cup \{ \star = \Mes_{[1]}(d^-)\} \vdash u^+\verifies [1] \ \atfter \ c^-, v^+, \star $$ 
Combining both circuits, by putting
$$ \mf H' = \mf H \cup \{ \star^+ = \Mes_{[1]}(d^+), \star^- = \Mes_{[1]}(d^-)\}, $$
one has
$$
\mf H' \vdash u^+\verifies [1] \ \atfter \ c^-, v^+, \star^- \quad \hbox{and} \quad \mf H' \vdash u^-\verifies [1] \ \atfter \ c^+, v^-, \star^+ $$
but it is not possible to deduce from this any verification statement of the form
$$ \mf H' \vdash u^+, u^- \verifies [11] \atfter \Delta $$
since there is no slice in $\mf H'$ containing both $\{u^+, v^+, c^-, \star^-\}$ and $\{u^-, v^-, c^+, \star^+\}$. There is thus no way to contradict the previous verification statement
$$u^+, u^- \verifies [11]^\bot$$
which would have entailed $u^+, u^- \verifies [11]^\bot \atfter \Delta$ for any suitable $\Delta$.

\ 

Let's express the same argument again in the following simpler setup, inspired from \cite{vaidman:on_hardy}: consider a system made of two particles $A$ and $B$ and suppose that they are prepared in a state
$$\mathopen|\Psi_{\!H}\ket = \frac 1 {\sqrt 3} \bigl(\mathopen|\uparrow\ket_{\!A} \mathopen| \uparrow\ket_{\!B} + \mathopen|\downarrow\ket_{\!A} \mathopen|\uparrow\ket_{\!B} + \mathopen|\uparrow\ket_{\!A} \mathopen|\downarrow\ket_{\!B}\bigr), $$
and consider the situation where particle $A$ is measured with outcome $\bigl[\mathopen|\uparrow\ket_{\!A} - \mathopen|\downarrow\ket_{\!A}\bigr]$ and $B$ with outcome $\bigl[\mathopen|\uparrow\ket_{\!B} - \mathopen|\downarrow\ket_{\!B}\bigr]$, as depicted below:

\hbox{} \hfill %
\begin{tikzpicture}[point/.style={circle,inner sep=0pt,outer sep = 0pt, minimum size=2pt,fill=black}]
	\matrix[row sep=4mm,column sep=5mm]{
		\node (a1) [point] {}; & \node[rectangle] (A) {$\bigl[\mathopen|\uparrow\ket_{\!A} - \mathopen|\downarrow\ket_{\!A}\bigr]$} ; & \node (a2) [point] {}; \\
		\node (b1) [point] {}; & \node[rectangle] (B) {$\bigl[\mathopen|\uparrow\ket_{\!B} - \mathopen|\downarrow\ket_{\!B}\bigr]$} ; & \node (b2) [point] {}; \\
	};
\node [above = -2pt of a1] {$A_1$} ;
\node [above = -2pt of b1] {$B_1$} ;
\node [above = -2pt of a2] {$A_2$} ;
\node [above = -2pt of b2] {$B_2$} ;
\draw (a1) -- (A) -- (a2) ;
\draw (b1) -- (B) -- (b2) ;
\draw [thick] (A.north west) -- (A.south west) ;
\draw [thick] (A.north east) -- (A.south east) ;
\draw [thick] (B.north west) -- (B.south west) ;
\draw [thick] (B.north east) -- (B.south east) ;
\draw[decorate,decoration={brace}, thick] ($(b1) + (-9pt, -4pt)$) -- ($(a1) + (-9pt, 4pt)$)
  node[midway, left] (bracket) {$\bigl[\Psi_{\!H}\bigr]\ $} ;
\end{tikzpicture} %
\hfill \hbox{}

We have $A_1, B_1 \verifies [\Psi_{\!H}]$, and $A_2 = \Mes\bigl(A_1, \bigl[\mathopen|\uparrow\ket_{\!A} - \mathopen|\downarrow\ket_{\!A}\bigr]\bigr)$ so that
$$ A_2, B_1 \verifies [\Psi_{\!H}] \sas \bigl(\bigl[\mathopen|\uparrow\ket_{\!A} - \mathopen|\downarrow\ket_{\!A}\bigr] \otimes \top\bigr) = \bigl[ \mathopen|\uparrow\ket_{\!A} - \mathopen|\downarrow\ket_{\!A}\bigr] \otimes \bigl[\mathopen| \downarrow \ket_{\!B} \bigr]$$
An analysis can be conducted as follows: in a reference frame where particle $A$ is measured before $B$, the latter is in state $\mathopen|\downarrow\ket_{\!B}$ before being measured. Similarly, in a reference frame where $B$ is measured $A$, the measurement of $B$ with outcome $\bigl[\mathopen|\uparrow\ket_{\!B} - \mathopen|\downarrow\ket_{\!B}\bigr]$ entails that $A$ is in state $\mathopen|\downarrow\ket_{\!A}$ before it is measured.

Thus, using Lorentz invariance, prior to any measurement, particle $A$ and $B$ both respectively verify $[\downarrow_A]$ at $A_1$ and $[\downarrow_B]$ at $B_1$, so that the joint system would be such that
$$ A_1, B_1 \verifies [\downarrow_A \downarrow_B]$$
as follows from the \emph{Tens} and \emph{Meet} rules. But then using the \emph{Meet} rule again, we would have
$$ A_1, B_1 \verifies \bot = [\Psi_H] \wedge [\downarrow_A \downarrow_B]$$
It is, however, not possible to derive such a result. Formally, using the \emph{Sas} rule, we obtain

\begin{prooftree}
\AxiomC{$A_1, B_1 \verifies [\Psi_{\!H}]$}
\AxiomC{$A_2 = \Mes\bigl(A_1, \bigl[\mathopen|\uparrow\ket_{\!A} - \mathopen|\downarrow\ket_{\!A}\bigr]\bigr)$}
\RightLabel{Sas}
\BinaryInfC{$A_2, B_1 \verifies \bigl[ \mathopen|\uparrow\ket_{\!A} - \mathopen|\downarrow\ket_{\!A}\bigr] \otimes \bigl[\mathopen| \downarrow \ket_{\!B} \bigr]$}
\end{prooftree}
Similarly, since $B_2 = \Mes\bigl(B_1, \bigl[\mathopen|\uparrow\ket_{\!B} - \mathopen|\downarrow\ket_{\!B}\bigr]\bigr)$, we also~have (modulo the correct permutations)
$$ A_1, B_2 \verifies [\Psi_{\!H}] \sas \bigl(\top \otimes \bigl[\mathopen|\uparrow\ket_{\!A} - \mathopen|\downarrow\ket_{\!A}\bigr] \bigr) = \bigl[ \mathopen| \downarrow_1 \ket \otimes \bigl(\mathopen|\uparrow\ket_{\!B} - \mathopen|\downarrow\ket_{\!B}\bigr) \bigr]$$
Now, using the \emph{Ord} rule, it follows from this that
$$ A_1, B_2 \verifies [\downarrow_A] \otimes \! \top \qquad \hbox{and} \qquad A_2, B_1 \verifies \top \otimes [\downarrow_B]$$
or, equivalently,
$$ A_1 \verifies [\downarrow_A] \atfter B_2 \qquad \hbox{and} \qquad B_1 \verifies [\downarrow_B] \atfter A_2 $$
which is dramatically different from having $A_1 \verifies [\downarrow_A]$ and $B_1 \verifies [\downarrow_B]$. We recall that here, the statements with the ``$\atfter$'' part means that in slice $A_1, B_2$ and more generally, following theorem~\ref{theo:at_reads_after}, in any slice containing $A_1$ \emph{and having $B_2$ in its strong past}, $A_1$ does verify~$[\downarrow_A]$. Similarly, in any slice containing $B_1$ and having $A_2$ in its strong past, $B_1$ verifies~$[\downarrow_B]$. However, there exists no slice verifying these two conditions:
\begin{enumerate}
	\item it contains both $A_1$ and $B_1$,
	\item it has both $A_2$ and $B_2$ in its strong past.
\end{enumerate}
as illustrated below\footnote{Using the notations from the analysis of circuit $\mf H$, reference frame $F^-$ corresponds to slice $[A_1, B_2]$, $F^+$ to $[A_2, B_1]$ and $F^=$ to $[A_1, B_1]$.}:

\begin{centering}
\begin{tikzpicture}[point/.style={circle,inner sep=0pt,outer sep = 0pt, minimum size=2pt,fill=black}]
	\matrix[row sep=8mm,column sep=5mm]{
		\node (a1) [point] {}; & \node[rectangle] (A) {$\bigl[\mathopen|\uparrow\ket_{\!A} - \mathopen|\downarrow\ket_{\!A}\bigr]$} ; & \node (a2) [point] {}; \\
		\node (b1) [point] {}; & \node[rectangle] (B) {$\bigl[\mathopen|\uparrow\ket_{\!B} - \mathopen|\downarrow\ket_{\!B}\bigr]$} ; & \node (b2) [point] {}; \\
	};
\node [left = -2pt of a1] {$A_1$} ;
\node [left = -2pt of b1] {$B_1$} ;
\node [right = -2pt of a2] {$A_2$} ;
\node [right = -2pt of b2] {$B_2$} ;
\draw (a1) -- (A) -- (a2) ;
\draw (b1) -- (B) -- (b2) ;
\draw [thick] ($(a1) + (5pt, 4pt)$) -- ($(b1) + (5pt, -24pt)$) node [below = -4pt] {$A_1,B_1$} ;
\draw [thick] ($(a1) + (10pt, 24pt)$) node [above] {$A_1, B_2$} -- ($(a1) + (10pt, -4pt)$) .. controls ($(a1) + (10pt, -12pt)$) .. ($0.5*(b2) + 0.5*(a1)$) .. controls ($(b2) + (-10pt, 12pt)$) .. ($(b2) + (-10pt, 4pt)$) -- ($(b2) + (-10pt, -4pt)$) ;
\draw [thick] ($(b1) + (10pt, -4pt)$) -- ($(b1) + (10pt, 4pt)$) .. controls ($(b1) + (10pt, 12pt)$) .. ($0.5*(a2) + 0.5*(b1)$) .. controls ($(a2) + (-10pt, -12pt)$) .. ($(a2) + (-10pt, -4pt)$) -- ($(a2) + (-10pt, 24pt)$) node [above] {$A_2, B_1$};
\draw [thick] ($(a2) + (-5pt, 4pt)$) -- ($(b2) + (-5pt, -24pt)$) node [below = -4pt] {$A_2, B_2$} ;
\draw [thick] (A.north west) -- (A.south west) ;
\draw [thick] (A.north east) -- (A.south east) ;
\draw [thick] (B.north west) -- (B.south west) ;
\draw [thick] (B.north east) -- (B.south east) ;
\end{tikzpicture}  \\
\end{centering}
\noindent so that it is not possible to find a slice $\Delta$ such~that
$$ A_1, B_1 \verifies [\downarrow_A] \otimes [\downarrow_B] \atfter \Delta.$$

What conclusions can be drawn from this analysis? We think that the main lesson is that Redhead's definition of element of reality should be slightly modified, by attaching the value of a physical quantity not to a time (or, more generally, to a spacetime event) but more generaly to one or more spacetime events, as embodied by the slices of our formalism. In that case, they become clearly Lorentz invariant, as we specify more accurately which hypersurface can be taken into account when considering a given element of reality.

It has also been objected that the use of the ``product'' and the ``and'' rules could be at the origin of the difficulties exemplified by Hardy's paradox \cite{Vaidman93:EltsReality,CohenHiley95:Reexaming,CohenHiley96:EltsReality,vaidman:on_hardy}. Here, this rule corresponds to the \emph{Meet} rule which has been shown to be correct. Again, this is only possible because whole slices are taken into consideration, and, modulo the application of the \emph{Tens} rule, the conjunction of two verification statement can only be defined if they apply to compatible slices.

\section{Conclusion} 

The logical formalism we have developed in this article started as an attempt to carefully define a set of rules for telling whether a given quantum circuit is possible, i.e.\ whether it represents a physical experimental setup and a set of measurement outcomes which can actually be obtained.

Through the notion of verification statement, we have seen that to each slice (i.e.\ what corresponds in the circuit formalism to a finite sets of spacelike separated events) one could associate subspaces of the corresponding Hilbert space and, in particular, a minimal one (w.r.t.\ inclusion), which we call the \emph{epistemic state} of the slice. Here, the adjective \emph{epistemic} refers to the fact that verifications statements are indeed defined in a purely epistemic way, as they constitue predictive statement regarding the possibility of obtain certain outcomes and are defined by only refering to the experimentally accessible information, namely previous measurement outcomes and the structure of the experimental setup.

As we have seen, this leads to a formulation of quantum mechanics where states (at least epistemic ones) are functions of slices (and, more generally, to spacelike hypersurfaces) rather that of space-time events. This is by no means a new idea, as such formulations can be traced back to Dirac, to Tomonaga 
and Schwinger, 
and more recent discussions have argued that this was indeed a necessity in order to have a Lorentz-invariant realistic interpretation of quantum mechanics \cite{Aharonov-Albert:IsUsualII,CohenHiley95:Reexaming}. However, the logical formalism which we have presented in this article, as defined in figure~\ref{fig:last_logic}, does indeed provide such a formulation, at least in the context of quantum circuits.

Let us, finally, remark that in this formalism, the basic element is constituted by verification statements which, we stress again, are purely epistemic. Yet, in many cases, it does accurately resemble what one would take for a quantum vector state (in particular when the epistemic state of a slice is a one-dimensional subspace). In our opinion, this should be interpreted as meaning that quantum vector state should, in general, be seen as particular types of verification statements, and hence should be seen as begin of epistemic nature. In order to obtain an ontic interpretation of quantum mechanics from our logical approach, it would be interesting to consider the models \cite{Marker:ModelTheory,Marker:Introduction,Hodges:Shorter} of our theory.

\bibliographystyle{alpha}

\begin{thebibliography}{BBC{\etalchar{+}}93}

\bibitem[AA84]{Aharonov-Albert:IsUsualII}
Yakir Aharonov and David Albert.
\newblock Is the usual notion of time evolution adequate for quantum-mechanical
  systems? {II}.~{R}elativistic considerations.
\newblock {\em Physical Review D}, 29(2), 1984.

\bibitem[BBC{\etalchar{+}}93]{Bennett93Teleportation}
Charles~H. Bennett, Gilles Brassard, Claude Cr\'epeau, Richard Jozsa, Asher
  Peres, and William~K. Wootters.
\newblock Teleporting an unknown quantum state via dual classical and
  {E}instein-{P}odolsky-{R}osen channels.
\newblock {\em Physical Review Letters}, 70(13):1895--1899, 1993.

\bibitem[Bru07]{Brunet07PLA}
Olivier Brunet.
\newblock A priori knowledge and the {K}ochen-{S}pecker theorem.
\newblock {\em Physical Letters A}, 365(1-2):39--43, May 2007.

\bibitem[Bru09]{Brunet:2009}
Olivier Brunet.
\newblock {P}artial {D}escription of {Q}uantum {S}tates.
\newblock {\em International Journal of Theoretical Physics}, 48(3), March
  2009.

\bibitem[Bru15]{Brunet:WBR}
Olivier Brunet.
\newblock Quantum measurements from a logical point of view.
\newblock In Chris Heunen, Peter Selinger, and Jamie Vicary, editors, {\em
  Proceedings 12th International Workshop on Quantum Physics and Logic}, volume
  195 of {\em EPTCS}, 2015.

\bibitem[CH95]{CohenHiley95:Reexaming}
O.~Cohen and B.~J. Hiley.
\newblock Reexamining the assumption that elements of reality can be lorentz
  invariant.
\newblock {\em Physical Review A}, 52(1), 1995.

\bibitem[CH96]{CohenHiley96:EltsReality}
O.~Cohen and B.~J. Hiley.
\newblock Elements of reality, lorentz invariance and the product rule.
\newblock {\em Foundations of Physics}, 26(1), 1996.

\bibitem[CN00]{NielsenChung2000Book}
Isaac~L. Chuang and Michael~A. Nielsen.
\newblock {\em Quantum Computation and Quantum Information}.
\newblock Cambridge, 2000.

\bibitem[CPP92]{CliftonPagonisPitowsky92}
Robert Clifton, Constantine Pagonis, and Itamar Pitowksy.
\newblock Relativity, {Q}uantum {M}echanics and {E}{P}{R}.
\newblock In {\em Philosophy of Science Association}, 1992.

\bibitem[EPR35]{Einstein35EPR}
Albert Einstein, Boris Podolsky, and Nathan Rosen.
\newblock Can quantum-mechanical description of physical reality be considered
  complete?
\newblock {\em Physical Review}, 47:777--780, 1935.

\bibitem[Fri09]{Fritz:Possibilistic}
Tobias Fritz.
\newblock {P}ossibilistic {P}hysics, October 2009.

\bibitem[GHSZ90]{GHZ90}
Daniel~M. Greenberger, Michael~A. Horne, Abner Shimony, and Anton Zeilinger.
\newblock {B}ell's {T}heorem without {I}nequalities.
\newblock {\em American Journal of Physics}, 58(12):1131 -- 1143, 1990.

\bibitem[Har92]{Hardy92}
Lucien Hardy.
\newblock Quantum mechanics, local realistic theories, and {L}orentz-invariant
  realistic theories.
\newblock {\em Physical Review Letters}, 68(20):2981 -- 2984, 1992.

\bibitem[Hod97]{Hodges:Shorter}
Wilfrid Hodges.
\newblock {\em A {S}hort {M}odel {T}heory}.
\newblock Cambridge University Press, 1997.

\bibitem[LS09]{LundeenSteinberg09:Hardy}
J.~S. Lundeen and A.~M. Steinberg.
\newblock Experimental joint weak measurement on a photon pair as a probe of
  hardy's paradox.
\newblock {\em Physical Review Letters}, 102(2), 2009.

\bibitem[Mar00]{Marker:Introduction}
David Marker.
\newblock Introduction to model theory.
\newblock In {\em Model Theory, Algebra and Geometry}, volume~39. MSRI
  Publications, 2000.

\bibitem[Mar02]{Marker:ModelTheory}
David Marker.
\newblock {\em Model Theory: An Introduction}, volume 217 of {\em Graduate
  Texts in Mathematics}.
\newblock Springer, 2002.

\bibitem[Per02]{Peres2002:QuantumTheory}
Asher Peres.
\newblock {\em Quantum Theory: Concepts and Methods}, volume~72 of {\em
  Fudamental Theories of Physics}.
\newblock Kluwer, 2002.

\bibitem[Red87]{Redhead:Realism}
Michael Redhead.
\newblock {\em Incompleteness, {N}onlocality and {R}ealism}.
\newblock Clarendon Press, 1987.

\bibitem[Vai93]{Vaidman93:EltsReality}
Lev Vaidman.
\newblock Elements of reality and the failure of the product rule.
\newblock In {\em SFMP}, 1993.

\bibitem[Vai97]{vaidman:on_hardy}
Lev Vaidman.
\newblock The analysis of {H}ardy's experiment revisited, March 1997.

\bibitem[YYKI09]{Yokota09:Hardy}
K.~Yokota, T.~Yamamoto, M.~Koashi, and N.~Imoto.
\newblock Direct observation of hardy's paradox by joint weak measurement with
  an entangled photon pair.
\newblock {\em New Journal of Physics}, 11(3), 2009.

\end{thebibliography}
\newcommand{\etalchar}[1]{$^{#1}$}

\end{document}